\newcommand{\N}{\ensuremath{\mathbb{N}}}
\newcommand{\Z}{\ensuremath{\mathbb{Z}}}
\newcommand{\R}{\ensuremath{\mathbb{R}}}
\renewcommand{\epsilon}{\varepsilon}
\renewcommand{\vec}[1]{\mathbf{#1}}
\newtheorem{theorem}{Theorem}
\newtheorem*{theorem*}{Theorem}
\newtheorem{lemma}[theorem]{Lemma}
\newtheorem*{lemma*}{Lemma}
\newtheorem{coro}[theorem]{Corollary}
\crefname{coro}{Corollary}{Corollaries}
\crefname{prop}{Property}{Properties}
\newtheorem*{prop*}{Property}
\crefname{propo}{Proposition}{Propositions}
\newtheorem*{propo*}{Property}
\theoremstyle{definition}
\newtheorem{definition}{Definition}
\theoremstyle{remark}
\newtheorem{remark}{Remark}
\crefname{remark}{Remark}{Remarks}
\crefname{ex}{Example}{Examples}
\title{Pattern Complexity of Aperiodic Substitutive Subshifts}
\date{}
\author[1]{Etienne Moutot}%\footnote{Corresponding author}}
\author[2]{Coline Petit-Jean}
\affil[1]{\href{mailto:etienne.moutot@lis-lab.fr}{etienne.moutot@lis-lab.fr}\\
Aix-Marseille Université, CNRS, LIS, France}
\affil[2]{\href{mailto:coline.petit-jean@ens-lyon.fr}{coline.petit-jean@ens-lyon.fr}\\
ENS de Lyon, France}
\begin{document}

\maketitle

\begin{abstract}
This paper aims to better understand the link better understand the links between aperiodicity in subshifts and pattern complexity.
Our main contribution deals with substitutive subshifts, an equivalent to substitutive tilings in the context of symbolic dynamics.
For a class of substitutive subshifts, we prove a quadratic lower bound on their pattern complexity. 
Together with an already known upper bound, this shows that this class of substitutive subshifts has a pattern complexity in $\Theta(n^2)$.
We also prove that the recent bound of Kari and Moutot, showing that any aperiodic subshift has pattern complexity at least $mn+1$, is optimal for fixed $m$ and $n$.
\end{abstract}

\section{Introduction}

One of the most fascinating aspects of tilings of the plane is the existence of \emph{aperiodic tilesets}.
That is, a set of tiles that tessellates the plane but only in a non-periodic manner.
Aperiodic tilesets even exit in the specific setting of Wang tilings, where tiles are square with colored edges and cannot be rotated nor reflected. Two tiles can then be placed next to each other if the colors of the matching edges are the same.
Interestingly, aperiodic tilesets do not exist for a similar model in dimension one. If one considers tilings of the infinite line $\Z$ by bi-color Wang dominoes, then any set of dominoes tiling the line also tiles it periodically.
This has a consequence on the decidability of the most fundamental problem about tilings: the domino problem.
This problem just asks if a given tileset can or cannot tile the plane.
Considering this problem on a line or a plan changes everything: the domino problem for bi-infinite line is decidable in polynomial time, as it is equivalent to the existence of a cycle in a finite graph; whereas it is undecidable if one considers tilings of the plane.
From now on, ``tiling'' will always mean ``Wang tiling'', except stated otherwise. 

The first aperiodic tileset is due to Berger in 1966 \cite{Berger}, who disproved Wang's conjecture stating that aperiodic tilesets of the plane could not exist.
His tileset was expressed as a set of Wang tiles, a model introduced by Wang to study fragments of first order logic \cite{Wang}. Each tile is a unit square with a color on each edge. 
Two tiles can then be placed next to each other if the colors of the matching edges are the same.
Then from a finite set of tiles, one tries to tile the plane by copies of these tiles without rotations.
Berger set of tiles was initially made of 20426 tiles, reduced to 104 in his PhD thesis \cite{BergerPhD}.
This started the search for the smallest aperiodic set of Wang tiles.
Among many, one can cite Robinson's tileset \cite{Robinson} which has only 56 tiles.
With a radically different idea, Kari built a 14 tiles aperiodic tileset \cite{Kari_1996}.
His construction was quickly improved by Culik to 13 tiles \cite{Culik_1996}, which held the record for many years.
Finally, in 2015 Jeandel and Rao found an 11 tiles aperiodic tileset, and proved that it was the smallest possible \cite{Jeandel_Rao}.

If it is interesting to have aperiodic tilesets with few tiles, the number of tiles is not a good estimation of how ``complex'' the resulting tilings can be.
Indeed, with the same number of tiles one can end up with very simple tilings as well as extremely complex (encoding for example the space-time diagram of a Turing machine).
A better way of estimating how complex a tiling is by its \emph{pattern complexity}.
For a tileset $T$, we denote by $\mathcal{L}_{m,n}(T)$ the set of all $m\times n$ rectangular patterns appearing in at least one valid tiling by $T$. The pattern complexity of $T$ is then the cardinal of this set.
Besides to carrying more information about how ``complicated'' a tiling is, this definition of complexity can be seen as a generalization of the number of tiles, as the latter is simply the number of patterns of size $1\times 1$.
Computing the exact complexity of a tileset if often complicated, and a more studied quantity is the topological entropy of a tileset. %, which is given by 
%\[ h(T) = \lim_{n\rightarrow\infty} \frac{\log |\mathcal{P}_T(m,n)|}{n^2}  .\]
%
However, topological entropy is a very rough estimation of the complexity, and provide only an asymptotic estimation of the growth of the complexity function.
In this paper, we are interested in bounding the exact complexity of aperiodic tilesets.
More precisely, we want to understand what is the minimal complexity that can be achieved by an aperiodic tileset; or equivalently, how ``simple'' can an aperiodic tileset be.
In \cite{Kari_Moutot_2020}, it is shown that any aperiodic tileset have complexity at least $mn + 1$.
The first result of the current paper shows that this bound is optimal for fixed $m$ and $n$.

\smallskip

One of the most common techniques to build an aperiodic tileset is to use two dimensional substitutions.
In most of the cases, the resulting tilings are close to being fixpoints of well-chosen two-dimensional substitutions, leading to their aperiodicity.
This is the case of Berger's tileset, Robinson's, and even Jeandel and Rao's tileset was recently found to have a substitutive structure \cite{Labbe_2021}. Their substitutive nature also ensures that they have zero topological entropy.
Studying the complexity of substitutive tilesets is therefore a way of understanding the complexity of many aperiodic tilesets.
One of the rare exceptions is Kari's aperiodic tileset, which is known not to be substitutive, as it was shown to have positive topological entropy \cite{Durand_Gamard_Grandjean_2014}.
This also shows that it has an exponential pattern complexity.

More generally, \emph{substitutive subshifts} have been extensively studied in dimension one \cite{DurandSubs, Fogg}.
A substitutive subshift is a set of colorings of the infinite line $\Z$ generated by infinite iterations of a one-dimensional substitution.
In dimension one, Pansiot fully classified substitutions in terms of factor complexity \cite{Pansiot}.

In the two-dimensional case, substitutions have been mostly studied in the context of geometrical tilings \cite{ARNOUX2007251, Fernique, Grunbaum_Shephard_1987, Penrose, solomyak_nonperiodicity_1998}.
In this paper, we focus on discrete substitutions over two-dimensional words. 
Our second main result is a quadratic lower bound on the pattern complexity of a large class of two dimensional substitutive subshifts.
Together with Robinson's upper bound on the complexity of two-dimensional substitutions \cite[Theorem 7.16]{Arthur_Robinson_2004}, this shows that the pattern complexity of subshifts from this class is in $\Theta(n^2)$.
Such a bound paves th{}e way towards a classification of two-dimensional substitutions in terms of pattern complexity, as Pansiot did for one-dimensional substitutions.

After introducing definitions and basic properties in \cref{sec:def}, we show in \cref{sec:optimal} that the $mn+1$ complexity bound is optimal for fixed $m,n$.
Then we prove our lower bound on our class of substitutive subshifts in \cref{sec:lowerbound}.
Finally we conclude by some remarks on possible directions for future work in \cref{sec:future}.

%%%%%%%%%%%%%%%%%%%%%%%%%%%%%%%%%%%%%%%%%%%%%%%%%%%%%%%%%%%%%%%%%%%%%%%%%%

\section{Preliminaries}
\label{sec:def}
In this section, we introduce all the useful definitions and properties.

We will use the following conventions for our notations.
Capital $A$ designates a finite alphabet and $a$ and $b$ elements of it.
Capital $X$ is a subshift, whose configurations are usually denoted by the letter $c$, and patterns by $p$.
Vectors of $\Z^2$ are bold $\vec u, \vec v \in \Z^2$.
$\sigma$ is a substitution, and $i,j,k,l,m,n$ designate integers.

\subsection{Subshifts}
Tilings have an equivalent definition as subshifts of finite type.
Let $A$ be a finite alphabet. A \emph{pattern} of support $D\subset \Z^2$ is a coloring of $D$ with colors of $A$, i.e. $p\in A^D$, and a \emph{configuration} $c\in A^{\Z^2}$ is a coloring of $\Z^2$.
We say that a pattern $p$ of support $D$ \emph{appears} in a pattern or a configuration $p'$ if there exists $\vec u \in\Z^2$ such that for all $\vec v\in D$, $p_{\vec v} = p'_{\vec v - \vec u}$, in this case we write $p\sqsubseteq p'$ (and $p\sqsubset p'$ when $p\neq p'$). 
Let $F$ be a set of patterns, a \emph{subshift} $X_F$ is a set of configurations in which none of the patterns of $F$ appear:
\[ X_F := \{ c\in A^{\Z^2} \mid \forall p\in F, p\not\sqsubset c  \} . \]
If $F$ is finite, $X_F$ is called a \emph{subshift of finite type} (SFT for short).
Subshifts have an equivalent definition as subsets of $A^{\Z^2}$ that are shift-invariant and topologically closed.
In other words, any shift of a configuration of a subshift $X$ is still in $X$, and the limit of a sequence of shifts of configurations of $X$ is in $X$.

Let $\tau_{\vec u}$ denote the shift action by the vector $\vec u\in\Z^2$: let $p$ be a pattern or configuration of support $D$ (which is $\Z^2$ if $p$ is a configuration), then for all $\vec v\in D+\vec u, \tau_{\vec u}(p)_{\vec v} = p_{\vec v - \vec u}$.
A~configuration $c$ is said to be \emph{periodic} if there exists $\vec u\in\Z^2\backslash\{(0,0)\}$ such that $c = \tau_{\vec u}(c)$.
A~subshift is \emph{aperiodic} if it is not empty and contains no periodic configuration.

The set of valid tilings by a set of Wang tiles mentioned in the introduction is an SFT. These models are actually equivalent, as any SFT can be made into a set of Wang tilings by application of a single morphism.
As subshifts are the most natural model to deal with substitutions, we chose to use this formalism in this paper instead of Wang tilesets.

\subsection{Pattern complexity}

The set of all patterns appearing in a configuration $c$ is called the language of $c$, $\mathcal{L}(c)$; and we denote by $\mathcal{L}_{m,n}(c) := \mathcal{L}(c) \cap A^{\{0,\hdots, m-1\}\times\{0,\hdots, n-1\}}$ the set of all $m\times n$ rectangular patterns appearing in $c$. 
Similarly, $\mathcal{L}(X) = \bigcup_{c\in X} \mathcal{L}(c)$ is the set of all patterns of the subshift $X$ and $\mathcal{L}_{m,n}(X) := \bigcup_{c\in X} \mathcal{L}_{m,n}(c)$ is the set of $m\times n$ patterns of $X$.
The \emph{pattern complexity} (or just \emph{complexity}) of a configuration (resp. a subshift) is the number of such rectangular patterns $|\mathcal{L}_{m,n}(c)|$ (resp. $|\mathcal{L}_{m,n}(X)|$).

For configurations of dimension one, periodicity and complexity are linked: the Morse-Hedlund theorem states that a one-dimensional configuration $w\in A^\Z$ is periodic if and only if there exists n such that $P_c(n) \leq n$ \cite{Morse_Hedlund}.
For one-dimensional SFTs, the question is trivial as every non-empty SFT contains a periodic configuration (see for example \cite{Lind_Marcus}), and thus the complexity of the subshift cannot have any impact on its aperiodicity.

In dimension two, the link between complexity and periodicity becomes much more involved. 
Nivat conjectured in 1997 an implication similar to Morse and Hedlund's: any two-dimensional configuration with $P_c(m,n)\leq m,n$ for some $m,n$ must be periodic \cite{Nivat}. 
The other direction does not hold, as there exists periodic configurations with exponential complexity \cite{Berthe2000}.
Since two dimensional aperiodic SFTs do exist, it is natural to try to link their complexity with their aperiodicity, and a bound relating the two has been recently found:
\begin{theorem*}[Kari, Moutot \cite{Kari_Moutot_2020}]
	\label{th:karimoutot}
	Let $X$ be a subshift such that $\exists c\in X, \exists m,n, {|\mathcal{L}_{m,n}(c)| \leq mn}$.
	Then there exists $d\in X$ that is periodic.
\end{theorem*}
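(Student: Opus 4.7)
The plan is to translate the combinatorial hypothesis into an algebraic one and then extract a periodic element of $X$ by combining algebra with compactness of $X$.

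First, I would embed the alphabet $A$ injectively into $\mathbb{C}$ via some map $\phi$ and view each configuration $c\in X$ as a formal Laurent series $\hat c = \sum_{(i,j)\in\mathbb{Z}^2}\phi(c_{i,j})X^iY^j$. The Laurent polynomial ring $R=\mathbb{C}[X^{\pm1},Y^{\pm1}]$ acts on such series by convolution, and the key dictionary entry is that $c$ is periodic with period $(a,b)$ if and only if $X^aY^b-1$ belongs to the annihilator ideal $\mathrm{Ann}(\hat c)=\{f\in R:f\cdot\hat c=0\}$.

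Second, I would use the complexity hypothesis to produce a nonzero annihilator of $\hat c$. Polynomials supported in $\{0,\ldots,m-1\}\times\{0,\ldots,n-1\}$ form an $mn$-dimensional subspace of $R$, and for such an $f$ the equation $f\cdot\hat c=0$ reduces to exactly one linear constraint per distinct $m\times n$ pattern of $c$. With $|\mathcal{L}_{m,n}(c)|\le mn$ patterns we obtain at most $mn$ constraints on an $mn$-dimensional space: a nontrivial solution exists outright when the inequality is strict, and the equality case is handled by choosing $\phi$ generically so that the constraint vectors fail to be linearly independent.

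Third, I would apply the Kari--Szabados structure theorem for two-dimensional annihilators: any nonzero element of $\mathrm{Ann}(\hat c)$ has a multiple of the form $\prod_{i=1}^{k}(X^{a_i}Y^{b_i}-1)$ with $(a_i,b_i)\ne(0,0)$. Each factor is individually a genuine periodic annihilator, but the product annihilation does not by itself force $c$ to be periodic; the remaining task is to convert this product structure into an honest periodic configuration inside $X$.

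Finally, I would induct on $k$. The base case $k=1$ gives that $c$ itself is periodic, so $d=c$ works. For $k>1$, the identity $\prod_i(X^{a_i}Y^{b_i}-1)\cdot\hat c=0$ exhibits $(X^{a_1}Y^{b_1}-1)\cdot\hat c$ as annihilated by a product of only $k-1$ periodic factors, which allows one to identify arbitrarily large regions of $c$ on which a specific period $(a_i,b_i)$ is honestly realised. Taking a sequence of shifts of $c$ centered in such regions and passing to a convergent subsequence in the compact subshift $X$ yields a configuration $d\in X$ satisfying $\tau_{(a_i,b_i)}(d)=d$ pointwise, hence periodic.

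The main obstacle will be this last extraction step: annihilation by a \emph{product} of periodic polynomials is much weaker than genuine shift-invariance, and one must argue carefully that somewhere in the orbit closure a single factor of the product takes effect in the strong sense. The algebraic input is used only to guarantee that such "regions of true periodicity" become unbounded, after which compactness of $X$ delivers the periodic configuration $d$.
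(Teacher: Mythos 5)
First, a point of context: this theorem is quoted from Kari and Moutot's paper and is not proved in the present article, so there is no internal proof to compare against; your sketch can only be measured against the published argument it is clearly modelled on. Your roadmap --- encode $c$ as a power series, use low complexity to produce a nonzero annihilating Laurent polynomial, invoke the Kari--Szabados decomposition into a product $\prod_i(X^{a_i}Y^{b_i}-1)$, then extract a periodic configuration from the orbit closure --- is indeed the architecture of that proof.

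There are, however, two genuine gaps. (a) Your treatment of the equality case $|\mathcal{L}_{m,n}(c)|=mn$ is backwards: the constraints form a square $mn\times mn$ system, and a \emph{generic} injective $\phi$ tends to make such a system nonsingular, i.e.\ to destroy the nontrivial kernel you need; what you would need is a carefully tuned (non-generic) $\phi$, and you give no argument that one exists. The standard fix avoids this entirely: the at most $mn$ vectors $(\phi(p),1)\in\mathbb{C}^{mn+1}$, one per pattern $p$, span a proper subspace, so some nonzero $(f,g)$ is orthogonal to all of them; then $f\cdot\hat c$ is the constant $-g$, and either $g=0$ or $(X-1)f$ annihilates $\hat c$. (b) The final extraction step does not close. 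Knowing that $(X^{a_1}Y^{b_1}-1)\cdot\hat c$ is annihilated by $k-1$ periodic factors is a statement about the difference configuration, which may be nonzero everywhere; it does not by itself yield ``arbitrarily large regions of $c$ on which the period $(a_1,b_1)$ is honestly realised,'' and no compactness argument can start without such regions. This is exactly the hard part of the Kari--Moutot paper (it requires a substantial structural analysis of configurations that are finite sums of periodic functions), and your proposal explicitly defers it as ``the main obstacle'' rather than supplying it. As written, the attempt is a correct outline of the known strategy with its central step missing.
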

Which can be rephrased as a lower bound on the complexity of aperiodic subshifts.
\begin{coro}
	\label{coro:karimoutot}
	Let $X$ be an aperiodic subshift.
	Then for all $c\in X$ and all $m,n$, \[ {|\mathcal{L}_{m,n}(c)| \geq mn+1} . \]
\end{coro}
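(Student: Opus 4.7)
The corollary is essentially the contrapositive of the theorem stated just above it, so the plan is very short. I would begin by restating what aperiodicity means in this context: $X$ being aperiodic means, by definition, that $X$ is nonempty and contains no periodic configuration, i.e.\ there is no $d\in X$ with $d=\tau_{\vec u}(d)$ for some nonzero $\vec u$.

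Next, I would invoke the Kari--Moutot theorem (stated just before the corollary) in its contrapositive form. The theorem asserts the implication: if some $c\in X$ satisfies $|\mathcal{L}_{m,n}(c)|\leq mn$ for some integers $m,n$, then $X$ contains a periodic configuration. Contrapositively, if $X$ contains no periodic configuration, then no $c\in X$ can satisfy $|\mathcal{L}_{m,n}(c)|\leq mn$ for any $m,n$. Applied to our aperiodic $X$, this yields that for every $c\in X$ and every pair $(m,n)$, we must have $|\mathcal{L}_{m,n}(c)|\geq mn+1$, which is exactly the bound claimed.

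There is no real obstacle here: the entire content is in the Kari--Moutot theorem, and the corollary is simply a matter of rephrasing the conclusion by taking the contrapositive and noting that ``$|\mathcal{L}_{m,n}(c)|\leq mn$ fails for all $m,n$'' is equivalent to ``$|\mathcal{L}_{m,n}(c)|\geq mn+1$ for all $m,n$'' since these are integer-valued inequalities. The only thing worth writing out explicitly is this equivalence, to make clear why the jump from $\leq mn$ to $\geq mn+1$ is valid.
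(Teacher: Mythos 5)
Your proof is correct and matches the paper's (implicit) reasoning exactly: the paper offers no separate proof, simply noting that the corollary is a rephrasing of the Kari--Moutot theorem, and your contrapositive argument together with the integrality observation that $|\mathcal{L}_{m,n}(c)| > mn$ is equivalent to $|\mathcal{L}_{m,n}(c)| \geq mn+1$ is precisely the intended justification.
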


\subsection{Substitutions}
Substitutions on words have been extensively studied, see for example \cite{Fogg} for a good reference about them.
In dimension two, they have been studied mostly in the context of substitution of $\R^2$, that is, substitutions producing geometrical tilings.
In the current paper, we restrict ourselves to discrete uniform two-dimensional substitutions.
In this context, a \emph{substitution} of size $m\times n$ (or support $\{0,\hdots,m-1\}\times \{0,\hdots,n-1\}$) is a map $\sigma:A\rightarrow A^{\{0,\hdots,m-1\}\times \{0,\hdots,n-1\}}$.
It is said to be \emph{primitive} if there exists $k$ such that for all $a,b\in A,$ we have $a\sqsubseteq \sigma^k(b)$, and \emph{invertible} if no two letters have the same image.
The \emph{subshift of a substitution}, also called a substitutive subshift, is the set of configurations that can be made from applying the substitution to letters of $A$:
\[ X^\sigma := \left\{ c\in A^{\Z^2} \mid \forall p\sqsubset c, \exists a\in A, \exists k\in\N, p\sqsubseteq \sigma^k(a) \right\} . \]
In general this subshift is only a sofic subshift and not an SFT, as shown by Mozes \cite{Mozes_1988}.

Substitutions are a very convenient way of generating aperiodicity, by example by considering fixpoints configurations.
In the context of geometrical tilings, Solomyak characterized when the subshift of a substitution is aperiodic or not \cite{solomyak_nonperiodicity_1998}.
We recall here his result in the specific case of uniform rectangular substitutions.

\begin{definition}[Unique desubstitution]
	A $\mathcal{D}$-pattern $p\in A^{\mathcal{D}}$ can be \emph{uniquely desubstitued} by a substitution $\sigma$ of size $(k,l)$ if there exists $\vec t\in \Z^2$ which is unique modulo $(k,l)$ such that there exists $c\in X^\sigma, \tau_{\vec t}(\sigma(c))|_{\mathcal{D}} = p$.
\end{definition}

\begin{theorem}[Solomyak \cite{solomyak_nonperiodicity_1998}]
	\label{theo:solomyak}
	Let $\sigma$ be a primitive substitution.
	Then $X^\sigma$ is aperiodic if and only if there exists $\rho>0$ such that all patterns $p$ whose support fits in $\{ 1,\hdots,\rho \}^2$ can be uniquely desubstituted by $\sigma$.
\end{theorem}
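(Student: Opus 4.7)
The plan is to prove both directions separately, exploiting the intertwining relation $\sigma \circ \tau_{\vec{v}} = \tau_{(k,l) \cdot \vec{v}} \circ \sigma$ between the substitution and the shift action.

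For the direction $(\Leftarrow)$, assume every pattern of size $\rho \times \rho$ has a unique desubstitution, and suppose for contradiction that some $c \in X^\sigma$ has period $\vec{u} \neq 0$. A glueing argument on overlapping large patterns promotes pattern-wise unique desubstitution to a global statement: there exist a unique $\vec{s}$ modulo $(k,l)$ and some $c' \in X^\sigma$ with $c = \tau_{\vec{s}}(\sigma(c'))$. Periodicity of $c$ then gives $\tau_{\vec{u}}(\sigma(c')) = \sigma(c')$, and uniqueness of the offset forces $\vec{u} \in k\Z \times l\Z$; write $\vec{u} = (ka, lb)$. The intertwining rewrites this as $\sigma(\tau_{(a,b)}(c')) = \sigma(c')$; choosing preimages consistently (the blocks of $c$ at positions $(ki,lj)$ and $(k(i{+}a),l(j{+}b))$ coincide by periodicity, so identical letters may be picked in the preimage even when $\sigma$ is not invertible), one obtains a $c' \in X^\sigma$ with period $(a,b)$. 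Since $k,l \geq 2$, iterating this shrinks the period by a factor of at least $2$ at each step, producing a nonzero integer vector of norm $<1$ in finitely many steps — a contradiction.

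For the direction $(\Rightarrow)$, I argue by contrapositive. If every scale $\rho$ admits a pattern with two distinct desubstitutions witnessed by $c_\rho, c'_\rho \in X^\sigma$ and offsets $\vec{t}_\rho \not\equiv \vec{t}'_\rho \pmod{(k,l)}$, pigeonhole over the $kl$ offset classes combined with compactness of $X^\sigma$ yields, along a subsequence, limits $c, c' \in X^\sigma$ and fixed $\vec{t} \not\equiv \vec{t}' \pmod{(k,l)}$ satisfying the \emph{global} identity $\sigma(c) = \tau_{\vec{w}}(\sigma(c'))$ with $\vec{w} := \vec{t} - \vec{t}' \notin k\Z \times l\Z$.

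The main obstacle is then to convert this single shifted-matching relation into a genuine period of a configuration in $X^\sigma$. My approach is to iterate: applying $\sigma^n$ produces relations $\sigma^{n+1}(c) = \tau_{(k,l)^n \cdot \vec{w}}(\sigma^{n+1}(c'))$ at ever larger scales, and primitivity of $\sigma$ ensures that $\sigma^n(c)$ and $\sigma^n(c')$ contain translates of every pattern of $X^\sigma$ with bounded gaps, so that the matching relation is visible everywhere in $X^\sigma$. A compactness extraction on the iterated mismatch vectors, combined with the minimality of $X^\sigma$ inherited from primitivity, should produce in the limit a nonzero vector $\vec{p}$ and a configuration $d \in X^\sigma$ fixed by $\tau_{\vec{p}}$. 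The delicate point, which I expect to dominate the technical work, is balancing the exponential growth of $(k,l)^n \cdot \vec{w}$ against the growth of the matched region so as to extract a \emph{stable} direction of periodicity — a discrete counterpart of Solomyak's original eigenvalue-based argument in the geometric setting.
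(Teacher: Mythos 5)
The paper states this result as a quotation from Solomyak \cite{solomyak_nonperiodicity_1998} and gives no proof of it, so your attempt can only be judged on its own terms. Your ($\Leftarrow$) direction is the standard argument and is sound in outline: unique desubstitution forces any period $\vec u$ of $c$ to be $\equiv 0 \pmod{(k,l)}$, and iteration forces $\vec u \in k^j\Z \times l^j\Z$ for all $j$, hence $\vec u = 0$. Two points need care there: you silently assume $k,l\geq 2$ (if the substitution has size $1$ in one coordinate, the norm of the period need not shrink in that direction); and when $\sigma$ is not injective on letters, the $(a,b)$-periodic preimage you construct must be shown to lie in $X^\sigma$ --- $\sigma(c'')\in X^\sigma$ does not by itself put $c''$ in $X^\sigma$. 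Both are fixable, e.g.\ by arguing directly that $c$ is uniquely desubstitutable by $\sigma^j$ for every $j$ and reading off the divisibility of $\vec u$, rather than manufacturing periodic preimages.

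The genuine gap is in ($\Rightarrow$). Your pigeonhole-plus-compactness step does legitimately produce $c,c'\in X^\sigma$ and $\vec w\not\equiv 0\pmod{(k,l)}$ with $\sigma(c)=\tau_{\vec w}(\sigma(c'))$, but everything after that is a wish rather than a proof. Applying $\sigma^n$ yields mismatch vectors $(k^n w_1, l^n w_2)$ that diverge, and the proposed ``compactness extraction on the iterated mismatch vectors'' identifies no mechanism by which a single pair of configurations agreeing up to one non-grid shift yields a finite period vector of some configuration: in general, two distinct points of a minimal subshift can agree after a shift without the subshift containing any periodic point. This implication is precisely the hard content of the theorem --- the two-dimensional analogue of Mossé's recognizability theorem --- and the known proofs do not proceed by naive limit-taking; they use repetitivity of the primitive system to show that the set of vectors realizing such self-matchings is relatively dense and stable under the expansion map, and only then extract an eigenvalue/period. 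As written, your proposal proves only the easy half of the equivalence and leaves ``aperiodic $\Rightarrow$ uniquely desubstitutable'' unestablished.
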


A quadratic upper bound is already known for substitutive subshifts, due to Hansen and Robinson.
\begin{theorem*}[Hansen, Robinson {\cite[Theorem 7.17]{Arthur_Robinson_2004}}]
    Let $\sigma$ be a primitive and invertible substitution with square support.
    Then there exists $K>0$ such that 
    \[ |\mathcal{L}_{n,n}(X)|\leq K n^2 .\]
\end{theorem*}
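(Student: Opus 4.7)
The approach is a simple ``desubstitution'' recursion: I would show that every $n \times n$ pattern of $X^\sigma$ arises from an $m \times m$ pattern of $X^\sigma$ with $m \approx n/k$ via one application of $\sigma$, and then iterate this reduction until $m$ is bounded by a constant. Throughout, let $k$ denote the side length of $\sigma$'s square support.

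The first step is to establish the inequality
\[
|\mathcal{L}_{n,n}(X^\sigma)| \;\leq\; k^2 \cdot |\mathcal{L}_{m,m}(X^\sigma)|, \qquad \text{where } m = \lceil (n+k-1)/k \rceil.
\]
To prove it, take any $p \in \mathcal{L}_{n,n}(X^\sigma)$. By the definition of $X^\sigma$, $p$ appears in some $\sigma^j(a) = \sigma(\sigma^{j-1}(a))$. Locating the top-left corner of $p$ inside $\sigma(\sigma^{j-1}(a))$, write its coordinates modulo $k$ as $(r_0, r_1) \in \{0, \ldots, k-1\}^2$. Then $p$ is entirely contained in $\sigma(p')$ for some $m \times m$ sub-pattern $p'$ of $\sigma^{j-1}(a)$, with offset $(r_0, r_1)$; the size $m$ is chosen precisely so that $km \geq n + k - 1$, which guarantees that $\sigma(p')$ is large enough to contain $p$ whatever the offset. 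Since $p$ is determined by the triple $(p', r_0, r_1)$ and $p' \in \mathcal{L}_{m,m}(X^\sigma)$, the inequality follows.

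The second step is to iterate. Setting $n_0 := n$ and $n_{j+1} := \lceil (n_j + k - 1)/k \rceil$, one easily checks by induction that $n_j \leq n/k^j + k/(k-1)$, so after $J = O(\log_k n)$ iterations $n_J$ is bounded by some absolute constant $C_0$. The recursion yields
\[
|\mathcal{L}_{n,n}(X^\sigma)| \;\leq\; k^{2J} \cdot |\mathcal{L}_{n_J, n_J}(X^\sigma)| \;\leq\; k^{2J} \cdot |A|^{C_0^2},
\]
and since $k^{2J} = O(n^2)$, the bound $|\mathcal{L}_{n,n}(X^\sigma)| \leq K n^2$ follows with $K$ depending only on $k$ and $|A|$.

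Neither invertibility nor primitivity is really needed for this purely upper-bound argument; primitivity just ensures that $X^\sigma$ is non-trivial and that every sub-pattern of $\sigma^j(a)$ genuinely belongs to $\mathcal{L}(X^\sigma)$. The only point requiring a bit of care is the choice of $m$: it must be large enough that $\sigma(p')$ contains an $n \times n$ window at \emph{any} offset in $\{0, \ldots, k-1\}^2$, yet small enough that the map $n \mapsto m$ still contracts by roughly a factor of $k$ so that $O(\log_k n)$ levels of recursion are enough. This bookkeeping is really the main, and quite mild, obstacle; the result would extend with essentially the same proof to rectangular supports $k_1 \times k_2$, with the constant $K$ depending on $k_1, k_2, |A|$ but the $\Theta(n^2)$ rate unchanged.
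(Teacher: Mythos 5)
The paper does not prove this statement; it imports it from Robinson's survey as a black box, so there is no in-paper argument to compare against. Your desubstitution-and-count argument is the standard route to this bound and is essentially sound: the map $(p',r_0,r_1)\mapsto p$ is a surjection onto $\mathcal{L}_{n,n}(X^\sigma)$ and needs no uniqueness of preimages, so you are right that invertibility is not used, and primitivity only serves to keep the intermediate patterns inside the language (alternatively, run the whole recursion in the a priori larger set of patterns occurring in some $\sigma^j(a)$, which contains $\mathcal{L}_{n,n}(X^\sigma)$). Two points need repair. First, the stated inequality $n_j \leq n/k^j + k/(k-1)$ is false (for $k=3$, $n=100$ one gets $n_4=3$ while the bound gives $2.73$), and the induction does not close as written; the correct version is $n_{j+1} = \lceil (n_j+k-1)/k\rceil \leq (n_j+2k-2)/k$, whose fixed point is $2$, giving $n_j \leq (n-2)/k^j + 2$ — same conclusion, $n_J\leq 3$ after $J=\lceil\log_k n\rceil$ steps, hence $K = k^2|A|^9$ works. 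Second, the existence of an $m\times m$ pattern $p'$ of the language containing the preimage cells of $p$ deserves a sentence when that preimage rectangle touches the boundary of $\sigma^{j-1}(a)$: either embed $\sigma^{j-1}(a)$ into a larger $\sigma^{j-1+l}(b)$ via primitivity so there is room to pad to size $m\times m$, or invoke the standard fact that every $c\in X^\sigma$ is itself of the form $\tau_{\vec t}(\sigma(c'))$ with $c'\in X^\sigma$ and desubstitute inside the configuration rather than inside a finite iterate. You also implicitly need $k\geq 2$ for the recursion to contract, which is harmless since a $1\times 1$ substitution generates nothing of interest. With these adjustments the proof is complete, and, as you note, it extends verbatim to rectangular supports.
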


Our lower bound will not deal with all invertible substitutions, but only with the ones having what we call a \emph{determining position}, which is a position in the image that allows us to uniquely recover the antecedent.
\begin{definition}[Determining position]
    \label{def:det-pos}
	A substitution $\sigma$ has a \emph{determining position} $(i,j)$ if for all $a,b\in A$:
	\[ \sigma(a)_{i,j} = \sigma(b)_{i,j} \Rightarrow a = b .\]
\end{definition}
Equivalently, this means that $\left\{ \sigma(a)_{i,j} \mid a\in A \right\} = A$. See \cref{fig:ex_det-pos} for an illustration of this property.

\begin{figure}[ht]
    \centering
    \includegraphics[scale=0.3]{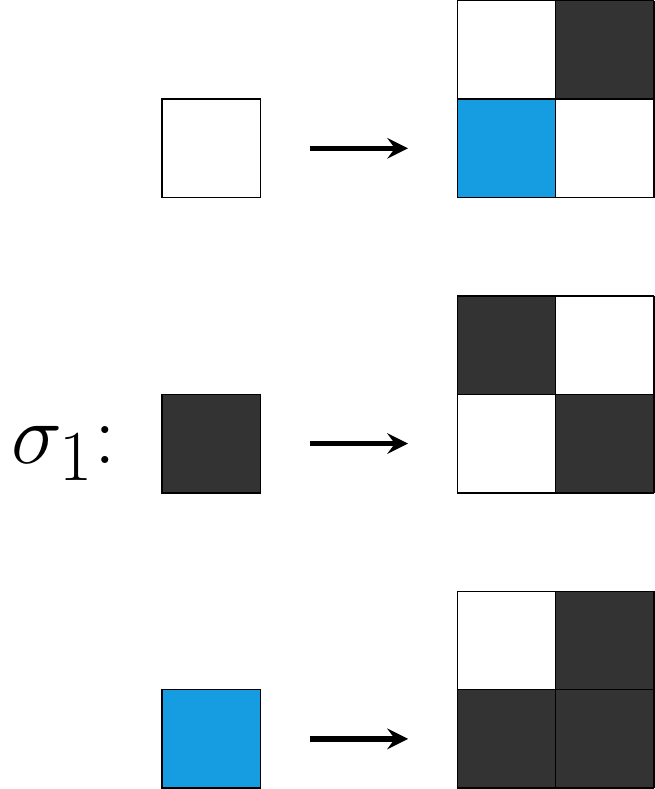}
    \qquad\qquad
    \includegraphics[scale=0.3]{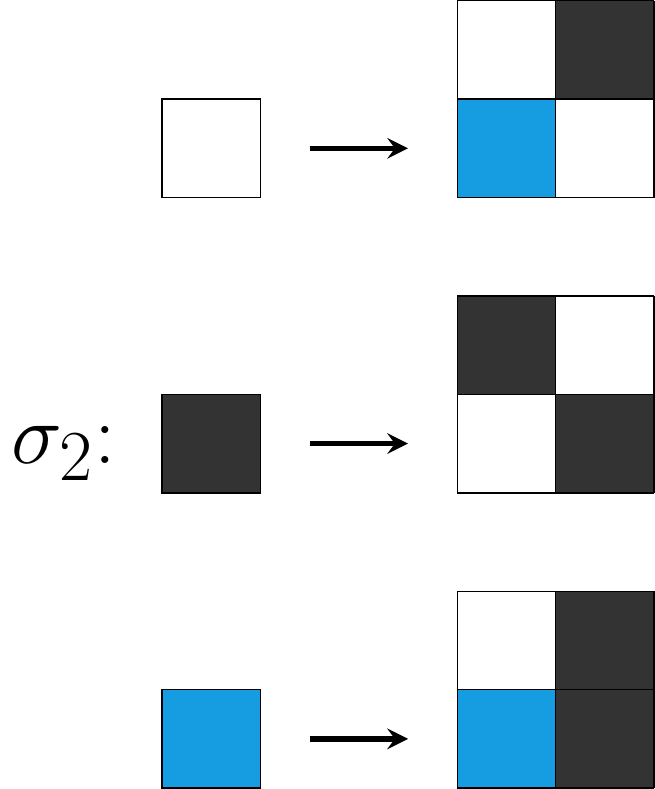}
    \caption{Illustration of \cref{def:det-pos}: $\sigma_1$ has a determining position $(0,0)$ whereas $\sigma_2$ has no determining position.}
    \label{fig:ex_det-pos}
\end{figure}

\begin{remark}
    If $|A| = 2$, all non-constant substitutions have a determining position.
    Indeed, as the images of the two letters must be different, they must differ in at most one position, which is then a determining position.
\end{remark}

%%%%%%%%%%%%%%%%%%%%%%%%%%%%%%%%%%%%%%%%%%%%%%%%%%%%%%%%%%%%%%%%%%%%%%%%%%

\section{Optimality of the \texorpdfstring{$mn+1$}{mn+1} bound}
\label{sec:optimal}

\cref{coro:karimoutot} shows that all configurations in aperiodic subshifts must have complexity at least $mn+1$ for all $m,n$.

It is not known whether this bound is optimal in general, however if $m$ and $n$ are fixed, it is possible to prove its optimality.

\begin{theorem}
	\label{th:optimal}
	For all $m,n \in \N$, there exists an aperiodic SFT $X$ such that for all $c \in X,$ we have $|\mathcal{L}_{m,n}(c)| = mn + 1$.
\end{theorem}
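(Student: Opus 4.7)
The plan is to start from any aperiodic minimal SFT $Y$ on a finite alphabet $B$ (for instance a minimal subsystem of Robinson's tileset), choose an integer $N$ with $N > m+n$ and $N^2 = |B|$ (padding the alphabet of $Y$ with inert symbols if necessary to force the equality while preserving aperiodicity and minimality), and fix a bijection $\pi\colon B \to \{0,\ldots,N-1\}^2$. To each $y\in Y$ I then associate the \emph{blowup} $\widetilde{y}\in\{0,1\}^{\Z^2}$ that places a single $1$ at position $(iN+\pi(y_{i,j})_1,\, jN+\pi(y_{i,j})_2)$ inside the $N\times N$ block indexed by $(i,j)$ and $0$ elsewhere; let $X$ be the orbit closure under $\Z^2$-shifts of $\{\widetilde{y} : y\in Y\}$.

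To realise $X$ as an SFT, I would forbid all binary patterns of a sufficiently large fixed size (depending on $N$ and the radius of $Y$'s local rules) that are not windows of some $\widetilde{y}$; this yields a finite list of forbidden patterns. Any configuration avoiding these patterns must itself be a blowup: the positions of the $1$s inside a large enough window determine a unique block origin modulo $N$ together with the $Y$-symbols covered by that window, and the overlap of consecutive windows propagates both the block origin and the $Y$-configuration globally. Aperiodicity of $X$ transfers from that of $Y$: any period $\vec{u}$ of a configuration $\widetilde{y}$ must preserve the residues modulo $N$ of the marker positions, forcing $\vec{u}\in N\Z^2$; writing $\vec{u} = N\vec{v}$, the vector $\vec{v}$ is then a period of $y$, so $\vec{v}=\vec{0}$ by aperiodicity of $Y$, and hence $\vec{u}=\vec{0}$.

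For the complexity count, the inequality $N > m+n$ guarantees that every $m\times n$ window in any configuration $c\in X$ meets at most one marker. Thus $\mathcal{L}_{m,n}(c)$ consists of the all-zero pattern together with the patterns having a single $1$ at one of the $mn$ positions of $\{0,\ldots,m-1\}\times\{0,\ldots,n-1\}$, giving at most $mn+1$ patterns. The all-zero pattern is realised by placing the window inside a block away from its marker; each single-marker pattern at position $(i,j)$ is realised by centring the window on a marker whose in-block coordinates are $(i,j)$. Bijectivity of $\pi$ together with minimality of $Y$ ensures that every in-block position is attained by some marker in every $c\in X$, so all $mn+1$ patterns appear in every configuration and $|\mathcal{L}_{m,n}(c)| = mn+1$ exactly.

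The main obstacle I expect is the SFT verification: rigorously checking that the finite list of local rules leaves no room for configurations that are ``locally blowups'' but drift globally, i.e., whose block origin modulo $N$ is not consistent across all of $\Z^2$, or whose $1$s coincidentally admit more than one valid blowup interpretation. This reduces to a local-to-global argument on the unique recoverability of the block origin from the marker positions in a sufficiently large window, together with the propagation of $Y$'s local rules under the overlap of windows, and it may require choosing $\pi$ with enough irregularity to break any spurious symmetries. A more administrative concern is arranging for $Y$ to be minimal with alphabet of size exactly $N^2$; this is handled by standard padding constructions.
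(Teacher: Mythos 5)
Your overall strategy (blow up an aperiodic SFT into a sparse $\{0,1\}$-configuration so that every $m\times n$ window sees at most one symbol $1$, then invoke \cref{coro:karimoutot} for the matching lower bound) is the same as the paper's, but your sparsification step has a genuine gap. You place one marker per $N\times N$ block at the position $\bigl(iN+\pi(y_{i,j})_1,\, jN+\pi(y_{i,j})_2\bigr)$, where $\pi(y_{i,j})$ ranges over \emph{all} of $\{0,\dots,N-1\}^2$. Markers in adjacent blocks are then not uniformly separated: if the block at $(0,0)$ carries a symbol with $\pi$-image $(N-1,N-1)$ and the block at $(1,1)$ carries one with $\pi$-image $(0,0)$, the two $1$s sit at $(N-1,N-1)$ and $(N,N)$, i.e.\ in diagonally adjacent cells. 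No choice of $N$ (in particular not $N>m+n$) prevents this, so for $m,n\geq 2$ some $m\times n$ windows contain two $1$s, the language acquires two-marker patterns on top of the $mn$ single-marker ones and the all-zero one, and $|\mathcal{L}_{m,n}(c)|$ strictly exceeds $mn+1$. What controls inter-block separation is not the block size but the set of positions the marker is allowed to occupy; your bijection uses the full block, which is exactly what breaks the bound.

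The paper's proof avoids this with an extra layer you are missing: after encoding each letter of $A$ as a $k\times k$ binary square, it composes with the substitution $0\mapsto R_0$, $1\mapsto R_1$, where $R_0,R_1$ are $m\times n$ rectangles carrying their only possible $1$ at the corner $(0,0)$. All $1$s of the final configuration therefore lie on the sublattice $m\Z\times n\Z$, and an $m\times n$ window meets at most one point of that sublattice, which is the statement you need. You could repair your construction analogously by confining $\pi$'s image to a sub-square of each block whose distance to the block boundary exceeds $\max(m,n)$, at the cost of enlarging $N$. Two smaller remarks: minimality of $Y$ is unnecessary, since the lower bound $|\mathcal{L}_{m,n}(c)|\geq mn+1$ comes for free from \cref{coro:karimoutot} once $X$ is aperiodic (and to realise a single-marker pattern at window position $(i,j)$ you just slide the window over any marker, not over a marker with in-block coordinates $(i,j)$); and your honest worry about the local-to-global SFT verification is resolved exactly as in the paper, by forbidding all sufficiently large windows that do not occur in any legitimate blowup, which pins down the block origin and the underlying $Y$-configuration simultaneously.
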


\begin{proof}
Let $m,n\in\N$ be fixed and $Y\subseteq A^{\Z^2}$ be an aperiodic SFT on some alphabet $A$ (for example the set of all Robinson tilings \cite{Robinson}). We will ``blow up'' $Y$ by encoding each of its colors into a large rectangle over alphabet $\{0,1\}$.

Let $k$ be such that $2^{k^2}\geq |A|$.
Then, for each $a\in A$, it is easy to build a $k\times k$ square $S(a)\in\{0,1\}^{\{0,\hdots,k-1\}^2}$ encoding $a$ in such a way that $S$ is an isomorphism.
$S$ can be naturally extended to patterns or configurations of $A^{\Z^2}$.
%Note that directly ``reproducing'' configurations of $X_{F_0}\subseteq A^{\Z^2}$ using this encoding might produce configurations with complexity higher than $mn+1$.  To achieve our bound we need to make an even sparser encoding.

Let $R_0$ and $R_1$ be the $m\times n$ rectangles with respectively a $0$ and a $1$ in position $(0,0)$ and 0s elsewhere. We define the substitution
\[ \sigma:\begin{cases}
 0 \mapsto R_0\\
 1 \mapsto R_1
\end{cases} .\]

\begin{figure}[ht]
    \centering
    \includegraphics[scale=0.4]{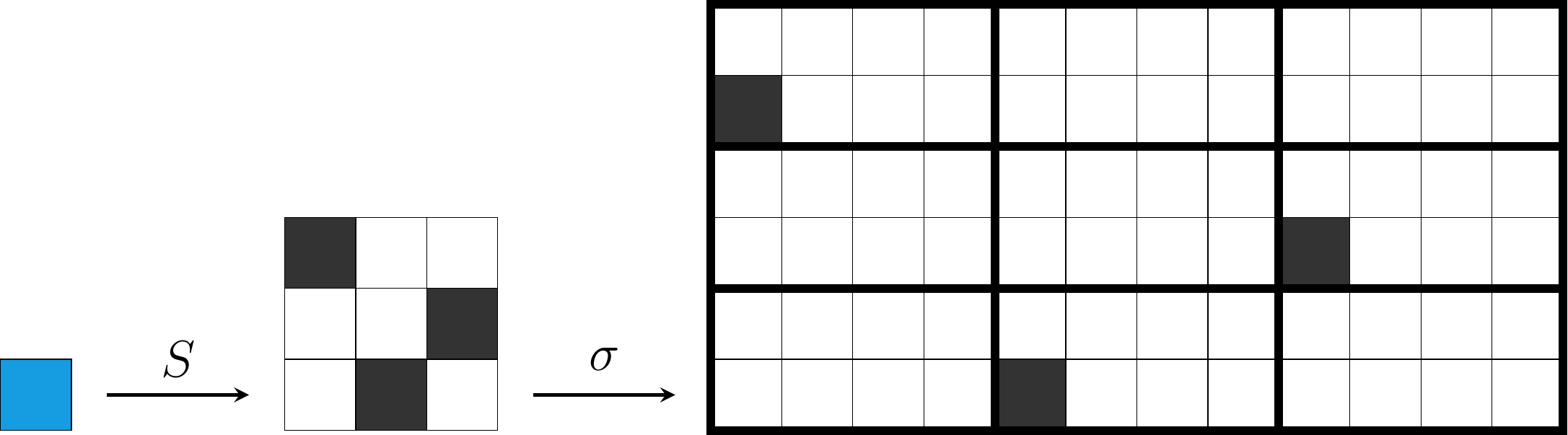}
    \caption{Illustration of the sparse encoding of \cref{th:optimal} with $k=3, m=4$ and $n=2$. We first apply $S$ to some $s\in A$, and then apply $\sigma$ to this image.}
    \label{fig:mn+1_encoding}
\end{figure}

Let $X := \sigma(S(Y)) = \{ \sigma(S(c)) \mid c\in Y \}$.
To prove that $X$ is an SFT, let $F_0$ be the finite set of fobidden patterns of $Y$.
Let 
\[ F_1 = \left\{ p\in \{0,1\}^{\{0,\hdots,2km-1\}\times\{0,\hdots,2kn-1\}} \mid \forall q\in A^{\{0,\hdots,3\}\times\{0,\hdots,3\}},  p \not\sqsubset \sigma(S(q)) \right\}  \]
bet the set of $2km \times 2kn$ patterns not appearing in any $2km \times 2kn$ patterns of the form $\sigma(S(q))$.
Let $F = F_1 \cup \sigma(S(F_0))$.
Then, $X = X_{F}$. Indeed, $F_1$ ensures that all configurations of $X_F$ are desubstituable by $\sigma(S)$, and together with $\sigma(S(F_0))$, we have $c\in X_F \Leftrightarrow c\in \sigma(S(Y))$ thanks to the fact that $\sigma(S)$ is one-to-one.
As $F$ is finite, $X$ is an SFT.

Now, since every $c\in X$ can be written $\sigma(d)$ with $d\in S(X_0)$, every $m\times n$ rectangular pattern of $c$ contains at most one $1$, therefore $|\mathcal{L}_{m,n}(c)| \leq mn+1$.
But because $X$ is aperiodic, \cref{coro:karimoutot} ensures that $|\mathcal{L}_{m,n}(c)| \geq mn+1$, so $|\mathcal{L}_{m,n}(c)| = mn+1$.
\end{proof}

One might hope to improve the $mn+1$ bound of \cref{coro:karimoutot} by using a result of Cassaigne:

\begin{theorem*}[Cassaigne, direct consequence of \cite{Cassaigne_1999}]
\begin{sloppypar}
	Let $c\in A^{\Z^2}$ be such that for all $m,n, {|\mathcal{L}_{m,n}(c)| = mn +1}$.
	then there exists $d\in\overline{\mathcal{O}(c)}$ which is uniform.
	In particular, the subshift $\overline{\mathcal{O}(c)}$ is not periodic.
\end{sloppypar}
\end{theorem*}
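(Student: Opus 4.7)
The plan is to obtain the uniform $d$ via compactness, with the content of \cite{Cassaigne_1999} doing the main combinatorial work. First I would observe that $|\mathcal{L}_{1,1}(c)| = 1\cdot 1 + 1 = 2$, so $c$ uses exactly two symbols; write $A = \{0, 1\}$.

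Next I would invoke Cassaigne's result to establish the key structural fact: there exists $a \in \{0, 1\}$ such that for every $n$, the uniform $n \times n$ pattern with all cells equal to $a$ lies in $\mathcal{L}(c)$. Informally, attaining the extremal complexity $mn+1$ forces $c$ to be very sparse in one of its two colors --- essentially one ``defect'' per rectangle of any given size --- and Cassaigne's analysis makes this precise by ruling out the two-dimensional Sturmian-like alternatives that would otherwise have to be handled separately.

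Given this, the uniform $d$ is produced by a routine compactness argument. For each $n$, I pick $\vec{u}_n \in \Z^2$ so that $\tau_{\vec{u}_n}(c) \in \mathcal{O}(c)$ has an $n \times n$ all-$a$ block at the origin. Since $A^{\Z^2}$ is compact in the product topology, the sequence $(\tau_{\vec{u}_n}(c))_n$ admits a convergent subsequence whose limit $d \in \overline{\mathcal{O}(c)}$ agrees with $a$ on every finite square around the origin, hence is the uniform configuration of value $a$. For the ``in particular'' claim, I argue that $\overline{\mathcal{O}(c)}$ cannot be periodic: if it were, in the sense of reducing to a finite shift-orbit of a periodic configuration, then $d$ would be a translate of $c$, and since a uniform configuration is fixed by every shift we would get $c = d$, forcing $c$ to be uniform and contradicting $|\mathcal{L}_{1,1}(c)| = 2$. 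More generally, any nontrivial period $\vec{u}$ of $c$ would constrain $|\mathcal{L}_{m,n}(c)|$ to grow at most linearly in $m + n$ (depending on $\vec{u}$), directly contradicting the quadratic growth $|\mathcal{L}_{m,n}(c)| = mn + 1$ for large $m, n$.

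The main obstacle is the second step, extracting arbitrarily large monochromatic patches from the complexity identity. The bound $mn+1$ sits exactly one above the Kari--Moutot periodicity threshold $mn$, so their theorem cannot be applied directly, and the delicate combinatorial work of separating the single-defect regime (which yields uniform patches) from genuinely two-dimensional Sturmian-type configurations is precisely the content I would need to extract from \cite{Cassaigne_1999}.
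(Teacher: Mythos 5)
First, a point of comparison: the paper does not prove this statement at all --- it is quoted as a known result attributed entirely to \cite{Cassaigne_1999}, so there is no in-paper proof to measure your attempt against. Your first and third steps (observing $|\mathcal{L}_{1,1}(c)|=2$, then extracting a uniform limit $d\in\overline{\mathcal{O}(c)}$ by compactness from arbitrarily large monochromatic squares) are correct and standard, and deferring the existence of those monochromatic squares to Cassaigne's classification is consistent with how the paper itself treats the result. Be aware, though, that your informal gloss ``essentially one defect per rectangle'' is not an accurate picture of the extremal configurations: the configuration equal to $1$ exactly on the ray $\{(i,0) : i\geq 0\}$ has complexity $mn+1$ for all $m,n$, yet rectangles meeting the ray contain many $1$s. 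What is true (and what you actually need) is only that monochromatic squares of every size occur, because the defects can be shifted away from.

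The genuine gap is in your ``in particular'' step. The intended conclusion is that $\overline{\mathcal{O}(c)}$ is not \emph{aperiodic} in the paper's sense, i.e.\ it contains a periodic configuration; this follows in one line, since a uniform configuration is fixed by every shift and hence periodic, so $d$ itself is the witness. Your alternative argument --- that a nontrivial period $\vec u$ of $c$ would force $|\mathcal{L}_{m,n}(c)|$ to grow at most linearly in $m+n$ --- is false: a configuration that is constant along one direction and carries a positive-entropy one-dimensional sequence in the other is periodic yet has exponential pattern complexity, a phenomenon the paper explicitly cites from \cite{Berthe2000}. A single direction of periodicity imposes no linear (nor even polynomial) ceiling on $|\mathcal{L}_{m,n}(c)|$, so that portion of your argument should be removed and replaced by the one-line observation that $d$ uniform $\Rightarrow$ $d$ periodic $\Rightarrow$ $\overline{\mathcal{O}(c)}$ not aperiodic.
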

\noindent
However, it gives an information about configurations with $|\mathcal{L}_{m,n}(c)|=mn+1$ for \emph{all} $m,n$.
In order to improve \cref{coro:karimoutot} to $mn+2$, one would need a similar result with the hypothesis of the existence of $m,n$ such that $|\mathcal{L}_{m,n}(c)| = mn+1$.
Together with \cref{th:optimal}, this emphasizes the importance of the quantifiers and their order when studying the complexity of subshifts.

The SFT built in \cref{th:optimal} depends on $m$ and $n$, so it does not rule out the possibility of improving the uniform bound in $m$ and $n$.
In the next section we show that the uniform lower bound can indeed be improved for a large class of substitutive subshifts.

%%%%%%%%%%%%%%%%%%%%%%%%%%%%%%%%%%%%%%%%%%%%%%%%%%%%

\section{Lower bound for substitutive subshifts}
\label{sec:lowerbound}

In this section, we show our lower bound for a class of two-dimensional substitutive subshifts.
First, we need a classical lemma on substitutive subshifts, whose (short) proof is included for sake of completeness.

\begin{lemma}
	\label{lemma:primitive}
	Let $\sigma$ be a primitive substitution and $c\in X^\sigma$.
	Then for all $p\sqsubset c, \sigma(p) \sqsubset c$.
\end{lemma}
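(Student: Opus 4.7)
The plan is to locate $\sigma(p)$ first inside the iterated image $\sigma^{k+1}(a)$, and then locate such an iterated image inside a sufficiently large window of $c$.

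First, the definition of $X^\sigma$ applied to $p \sqsubset c$ yields $a \in A$ and $k \in \N$ with $p \sqsubseteq \sigma^k(a)$, so $\sigma(p) \sqsubseteq \sigma^{k+1}(a)$; it therefore suffices to exhibit an occurrence of $\sigma^{k+1}(a)$ inside $c$. To obtain one, primitivity is used to show that $\sigma^{k+1}(a)$ appears densely in images of letters. Let $N$ be such that $a \sqsubseteq \sigma^N(b)$ for every $b \in A$; for any $b \in A$ and any $j \geq 0$ the decomposition $\sigma^{N+j}(b) = \sigma^N(\sigma^j(b))$ exhibits $\sigma^{N+j}(b)$ as a tiling by $\sigma^N$-blocks, each of which contains at least one occurrence of $a$. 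Composing once more with $\sigma^{k+1}$, the same tiling structure lifts to $\sigma^{N+k+1+j}(b) = \sigma^{N+k+1}(\sigma^j(b))$, whose $\sigma^{N+k+1}$-blocks each contain a copy of $\sigma^{k+1}(a)$. In particular there is a constant $G$ depending only on $\sigma$, $k$ and $N$ such that every $G \times G$ sub-window of $\sigma^{N+k+1+j}(b)$ contains a copy of $\sigma^{k+1}(a)$.

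Next, pick a square sub-pattern $p' \sqsubset c$ of side larger than both $G$ and the side of $\sigma^{N+k+1}(a)$. The definition of $X^\sigma$ produces $a' \in A$ and $k' \in \N$ with $p' \sqsubseteq \sigma^{k'}(a')$; the lower bound on the side of $p'$ forces $k' \geq N+k+1$, so the density estimate of the previous paragraph applies to $\sigma^{k'}(a')$, and because $p'$ itself has side at least $G$, it contains an occurrence of $\sigma^{k+1}(a)$. This gives $\sigma(p) \sqsubseteq \sigma^{k+1}(a) \sqsubset p' \sqsubset c$, as required.

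The main technical step is the density estimate obtained from primitivity; once that is in place, the rest is size bookkeeping, choosing $p'$ large enough to both force the desubstitution exponent $k'$ above $N+k+1$ and to leave room for a full copy of $\sigma^{k+1}(a)$ inside the window that $p'$ occupies in $\sigma^{k'}(a')$.
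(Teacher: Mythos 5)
Your proof is correct and follows essentially the same route as the paper's: locate $\sigma(p)$ inside $\sigma^{k+1}(a)$, use primitivity to embed $\sigma^{k+1}(a)$ into every sufficiently deep supertile, and then find such a supertile inside a large enough pattern of $c$. You merely spell out in more detail (via the block decomposition and the window-size bookkeeping) the step the paper asserts without justification, namely that a big enough pattern of $c$ must contain a full supertile.
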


\begin{proof}
	Let $p \sqsubset c$. By definition of $X^\sigma$, there exists $a \in A$, $k \in \N$ such that $p \sqsubset \sigma^k(a)$. This implies that $\sigma(p) \sqsubset \sigma^{k+1}(a)$. 
	Moreover, $\sigma$ is primitive: there exists some $l \in \N$ such that for all $b \in A$, $ a \sqsubset \sigma^l(b)$ and consequently $\sigma^{k+1}(a) \sqsubset \sigma^{l + k +1}(b)$.

	By definition of $X^\sigma$, for a big enough pattern $p'$ appearing in c, there is some $b$ such that $\sigma^{l+k+1}(b) \sqsubset p'$. 
	Eventually, we have:
	\[ \sigma(p) \sqsubset \sigma^{k+1}(a) \sqsubset \sigma^{l + k + 1}(b) \sqsubset p' \sqsubset c . \]
\end{proof}

Then, we take a look at the determining position property for composition of substitutions.

\begin{lemma}
    If $\sigma$ and $\sigma'$ have determining positions, then $\sigma \circ \sigma'$ has a determining position.
\end{lemma}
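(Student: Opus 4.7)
The plan is to construct an explicit determining position for $\sigma \circ \sigma'$ out of the two given ones. Let $(i,j)$ be a determining position for $\sigma$, whose size I denote $(m,n)$, and let $(i',j')$ be a determining position for $\sigma'$. I claim that $(m i' + i,\; n j' + j)$ is a determining position for $\sigma \circ \sigma'$.

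To verify this, I would first make the coordinate convention explicit: applying $\sigma$ to the pattern $\sigma'(a)$ means placing, for each $(x',y')$ in the support of $\sigma'(a)$, the $(m \times n)$-block $\sigma(\sigma'(a)_{x',y'})$ at offset $(m x', n y')$. In formulas, for every $(p,q) \in \{0,\dots,m-1\}\times\{0,\dots,n-1\}$,
\[ (\sigma \circ \sigma')(a)_{m x' + p,\, n y' + q} \;=\; \sigma\bigl(\sigma'(a)_{x',y'}\bigr)_{p,q}. \]

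With this identity in hand, the verification is a two-step peeling. Suppose
\[ (\sigma \circ \sigma')(a)_{m i' + i,\, n j' + j} \;=\; (\sigma \circ \sigma')(b)_{m i' + i,\, n j' + j}. \]
The identity above rewrites this equality as $\sigma(\sigma'(a)_{i',j'})_{i,j} = \sigma(\sigma'(b)_{i',j'})_{i,j}$. The determining property of $\sigma$ at $(i,j)$ then gives $\sigma'(a)_{i',j'} = \sigma'(b)_{i',j'}$, and the determining property of $\sigma'$ at $(i',j')$ finally yields $a = b$.

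There is essentially no obstacle here beyond the notational care needed to track which pair of indices refers to which layer of the composition; the argument merely mirrors the two-level structure of $\sigma \circ \sigma'$. The same construction clearly iterates, so by induction any finite composition of substitutions admitting determining positions also admits one, with the position given by the obvious nested formula.
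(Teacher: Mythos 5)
Your proof is correct and follows essentially the same route as the paper: exhibit the explicit composed position, peel off the outer layer using the determining position of $\sigma$, then the inner layer using that of $\sigma'$. If anything, your explicit statement of the coordinate convention $(\sigma \circ \sigma')(a)_{m x' + p,\, n y' + q} = \sigma(\sigma'(a)_{x',y'})_{p,q}$ makes the index bookkeeping cleaner than the paper's version, which swaps the roles of the two positions in its labelling.
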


\begin{proof}
Let $\sigma$ and $\sigma'$ be two substitutions of size $(m,n)$ having determining positions $(k,l)$ and $(k',l')$ respectively. Let $a, a' \in A$. Then,
\begin{align*}
& (\sigma \circ \sigma')(a)_{(mk + k', nl + l')} = (\sigma \circ \sigma')(a')_{(mk + k',nl + l')}   \\ 
\Rightarrow& \sigma\left(\sigma'(a)_{(k,l)}\right)_{(k',l')} = \sigma\left(\sigma'(a')_{(k,l)}\right)_{(k',l')} \\
\Rightarrow& \sigma'(a)_{(k,l)} = \sigma'(a')_{(k,l)}\\
\Rightarrow& a = a' .
\end{align*}
\end{proof}

An immediate consequence is the following.

\begin{coro}
	\label{lemma:determining}
    If $\sigma$ has a determining position, then for all $k\geq 1, \sigma^k$ has a determining position.
\end{coro}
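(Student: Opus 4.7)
The plan is a straightforward induction on $k$, using the previous lemma as the inductive step. The base case $k=1$ holds trivially, since $\sigma^1 = \sigma$ has a determining position by hypothesis.

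For the inductive step, suppose $\sigma^k$ has a determining position for some $k\geq 1$. I would simply write $\sigma^{k+1} = \sigma \circ \sigma^k$ and apply the previous lemma: since both $\sigma$ and $\sigma^k$ have determining positions, so does their composition $\sigma^{k+1}$. This closes the induction and yields the corollary.

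The only subtlety worth pointing out is that the previous lemma is stated for substitutions of the same size $(m,n)$, whereas $\sigma^k$ has size $(m^k, n^k)$ if $\sigma$ has size $(m,n)$. However, inspecting the computation in the proof of the previous lemma shows that it only uses the fact that $\sigma'(a)_{(k,l)}$ and $\sigma'(a')_{(k,l)}$ are letters of $A$ (so that $\sigma$ can then be applied to them), and that the determining positions $(k,l)$ and $(k',l')$ lie in the respective supports. The argument therefore extends verbatim to substitutions of different sizes, so there is no obstacle here. Since this is explicitly phrased as an immediate consequence, no further argument is needed beyond the one-line induction.
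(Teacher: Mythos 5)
Your proof is correct and matches the paper's intent exactly: the paper states this as an immediate consequence of the composition lemma, and the intended argument is precisely your induction $\sigma^{k+1} = \sigma \circ \sigma^k$. Your remark that the composition lemma's proof works verbatim for substitutions of different sizes (needed since $\sigma$ and $\sigma^k$ have sizes $(m,n)$ and $(m^k,n^k)$) is a valid and worthwhile observation, but it does not constitute a departure from the paper's approach.
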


The next technical lemma is the key of our lower bound.
Essentially, it provides us a lower bound on the complexity based on the number of times $k$ that a patterns can be uniquely desubstituted.

\begin{lemma}
\label{turbolemme}
	Let $\sigma$ be a substitution of size $(M,M)$ having a determining position and $k \in \N$. 
	For all $n \in \N$, if all the patterns of size $(n, n)$ of $X^\sigma$ have a unique desubstitution by $\sigma^k$, then for all $c\in X^\sigma$:

	\[ |\mathcal{L}_{n,n}(c)| \geq n^2 \left( 1 + \frac{1}{\lceil \frac{n}{M^k} \rceil^2} \right) . \]
\end{lemma}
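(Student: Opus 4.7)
The plan is to combine the Kari-Moutot lower bound (\cref{coro:karimoutot}) with a shift-wise counting argument using the determining position of $\sigma^k$ provided by \cref{lemma:determining}.

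First I would show that $X^\sigma$ is aperiodic. If some $c \in X^\sigma$ had a period $\vec u \neq (0,0)$, any $(n,n)$ pattern of $c$ appearing at both $(x,y)$ and $(x,y)+\vec u$ would, by unique desubstitution, share the same shift modulo $M^k$, forcing $\vec u \in (M^k\Z)^2$. Using the determining position, one canonically desubstitutes $c$ into a pre-image $c' \in X^\sigma$ of period $\vec u/M^k$; since $c'$ again satisfies the hypothesis, iterating this gives $\vec u \in (M^{jk}\Z)^2$ for every $j$, hence $\vec u = (0,0)$, a contradiction. Then \cref{coro:karimoutot} guarantees $|\mathcal{L}_{a,b}(c'')| \geq ab+1$ for every $c'' \in X^\sigma$ and $a,b \geq 1$.

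Next, unique desubstitution attaches to each $p \in \mathcal{L}_{n,n}(c)$ a single shift $(s,t) \in \{0,\ldots,M^k-1\}^2$, so $|\mathcal{L}_{n,n}(c)| = \sum_{s,t} L_{s,t}$ where $L_{s,t}$ is the number of patterns of $c$ with shift $(s,t)$. Fix a determining position $(i_0,j_0)$ of $\sigma^k$. Inside an $(n,n)$ window of shift $(s,t)$, the positions congruent to $(i_0,j_0)$ modulo $M^k$ form a $D_s \times D_t$ sub-grid, where
\[ D_s := |\{j \in \{0,\ldots,n-1\} : j \equiv i_0-s \pmod{M^k}\}|, \]
and $D_t$ is defined analogously. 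At each such position, $c$ takes a value of the form $\sigma^k(a)_{i_0,j_0}$, which uniquely recovers the pre-image letter $a$; reading off these values encodes a $(D_s,D_t)$ sub-pattern of $c'$. Two $(n,n)$ patterns of the same shift with different encodings must differ at an encoding position and are thus distinct, while translating the window realises every $(D_s,D_t)$ sub-pattern of $c'$. Hence, when $D_s, D_t \geq 1$,
\[ L_{s,t} \geq |\mathcal{L}_{D_s,D_t}(c')| \geq D_sD_t + 1. \]

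Finally, each $j \in \{0,\ldots,n-1\}$ lies in exactly one residue class modulo $M^k$, so $\sum_s D_s = n$, giving $\sum_{s,t} D_sD_t = n^2$; the number of pairs $(s,t)$ with $D_s,D_t \geq 1$ equals $\min(n,M^k)^2$. Combining,
\[ |\mathcal{L}_{n,n}(c)| \geq n^2 + \min(n,M^k)^2 \geq n^2 + (n/N)^2 = n^2\left(1+\tfrac{1}{N^2}\right), \]
using $\min(n,M^k) \geq n/\lceil n/M^k \rceil = n/N$. The main obstacle is the first step: rigorously constructing the pre-image $c' \in X^\sigma$ from $c$ and showing it still satisfies the unique-desubstitution hypothesis, since this is needed both for the aperiodicity argument and for identifying shift-$(s,t)$ patterns of $c$ with sub-patterns of $c'$ in the encoding step.
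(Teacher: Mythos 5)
Your counting is correct and, at its core, dual to the paper's: where you partition $\mathcal{L}_{n,n}(c)$ into $M^{2k}$ classes according to the (unique) desubstitution shift and bound each class from below by a surjection onto the $(D_s,D_t)$-patterns of a pre-image configuration $c'$, the paper runs the map in the opposite direction, building an injection from the disjoint union $\bigsqcup_{i,j}\mathcal{L}_{\lceil\frac{n-i}{M^k}\rceil,\lceil\frac{n-j}{M^k}\rceil}(c)$ \emph{into} $\mathcal{L}_{n,n}(c)$ by extending each small pattern $p$ of $c$ to some $p^{ext}\sqsubset c$, applying $\sigma^k$ (which stays in $\mathcal{L}(c)$ by \cref{lemma:primitive}), and extracting an $(n,n)$ window with offset $(i,j)$; the determining position and the unique-desubstitution hypothesis play exactly the same roles in both directions, and your identity $\sum_s D_s=n$ is the same ceiling identity $\sum_{i=0}^{M^k-1}\lceil\frac{n-i}{M^k}\rceil=n$ used there. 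The practical difference is the obstacle you correctly flag: your route needs a \emph{global} desubstitution $c=\tau_{\vec w}(\sigma^k(c'))$ with $c'\in X^\sigma$ (to define the shift classes coherently, to realise every small pattern, and to have an aperiodic configuration to feed into \cref{coro:karimoutot}), which requires a compactness/coherence argument gluing the local unique desubstitutions; the paper's forward construction sidesteps this entirely, since it only ever applies $\sigma^k$ to finite patterns of $c$ and invokes \cref{coro:karimoutot} on $c$ itself. That gap is fillable by a standard diagonal argument (and your residue-coherence observation does extend unique desubstitution from $(n,n)$ patterns to arbitrarily large ones), but reversing the direction of the map as the paper does is the cleaner fix. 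One further remark: your attempt to \emph{derive} aperiodicity from the hypotheses is more careful than the paper, which simply asserts that $c$ is aperiodic inside this lemma (aperiodicity and primitivity are really standing assumptions supplied by \cref{th:main}, where the lemma is applied); you may as well add them to the hypotheses rather than prove them.
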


\begin{proof}
Let us denote by $\mathcal{P}_c(k,l)$ the elements of $\mathcal{L}_{k,l}(c)$ for this proof for better reading of the equations.
Using the fact that $\sigma^k$ also has a determining position we build an injective function $f$ from the disjoint union $\bigsqcup_{0 \leq i,j \leq M^k-1} \mathcal{P}_c(\lceil \frac{n-i}{M^k} \rceil, \lceil \frac{n-j}{M^k} \rceil)$ to $\mathcal{P}_c(n,n)$. The existence of such an injective function, put together with the result of Kari and Moutot (\cref{coro:karimoutot}), yields the result.

\bigskip

Let $(d_1, d_2)\in\{0,\hdots, M^k-1\}$ be the determining position of $\sigma^k$ obtained in \cref{lemma:determining}, ${i,j\in\{ 0, \hdots, M^k-1  \}}$ and ${p \in \mathcal{P}_c(\lceil \frac{n-i}{M^k} \rceil, \lceil \frac{n-j}{M^k} \rceil)}$. The fact that $p$ appears in $c$ implies that it can be extended to bigger patterns also appearing in $c$.
Then, for each such $p$, we fix one $p^{ext} \in A^{\{-1, \dots, \lceil \frac{n-i}{M^k} \rceil + 1 \} \times \{-1, \dots, \lceil \frac{n-j}{M^k} \rceil + 1 \}}$ such that $p^{ext}|_{ \{0, \dots, \lceil \frac{n-i}{M^k} \rceil - 1\} \times \{0, \dots, \lceil \frac{n-j}{M^k} \rceil - 1\}} = p$ and $p^{ext} \sqsubset c$. We define the following pattern:
\[ f_{i,j}(p) = \tau_{i-d_1, j-d_2}(\sigma^k(p^{ext}))|_{ \{0,\dots, n-1\}^2} \]
This is well defined because the fact that $i, j, d_1, d_2 \in \{0, \dots, M^k-1\}$ implies that \[ \{0,\dots, n-1\}^2 \subset (i-d_1, j-d_2) +  \{-1, \dots, \left\lceil \frac{n-i}{M^k} \right\rceil + 1 \} \times \{-1, \dots, \left\lceil \frac{n-j}{M^k} \right\rceil + 1 \} . \]
By \cref{lemma:primitive} it holds that $\sigma^k(p^{ext}) \sqsubset c$ and consequently that $f_{i,j}(p) \in \mathcal{P}_c(n,n)$. 

\begin{figure}[ht]
     \centering
     \begin{subfigure}[b]{0.14\textwidth}
         \centering
         \includegraphics[scale=0.4]{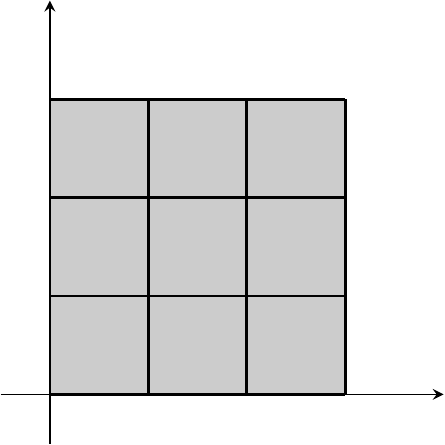}
         \caption{$p$}
         \label{fig:p}
     \end{subfigure}
     \hfill
     \begin{subfigure}[b]{0.24\textwidth}
         \centering
         \includegraphics[scale=0.4]{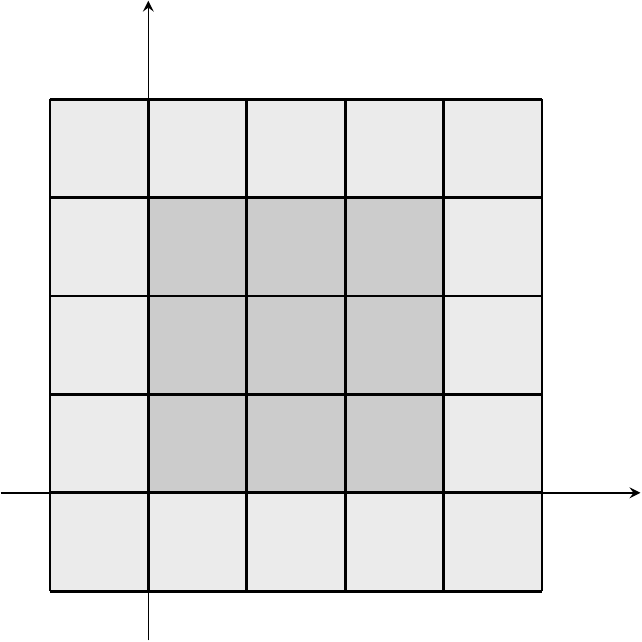}
         \caption{$p^{ext}$}
         \label{fig:pext}
     \end{subfigure}
     \hfill
     \begin{subfigure}[b]{0.55\textwidth}
         \centering
         \includegraphics[scale=0.8]{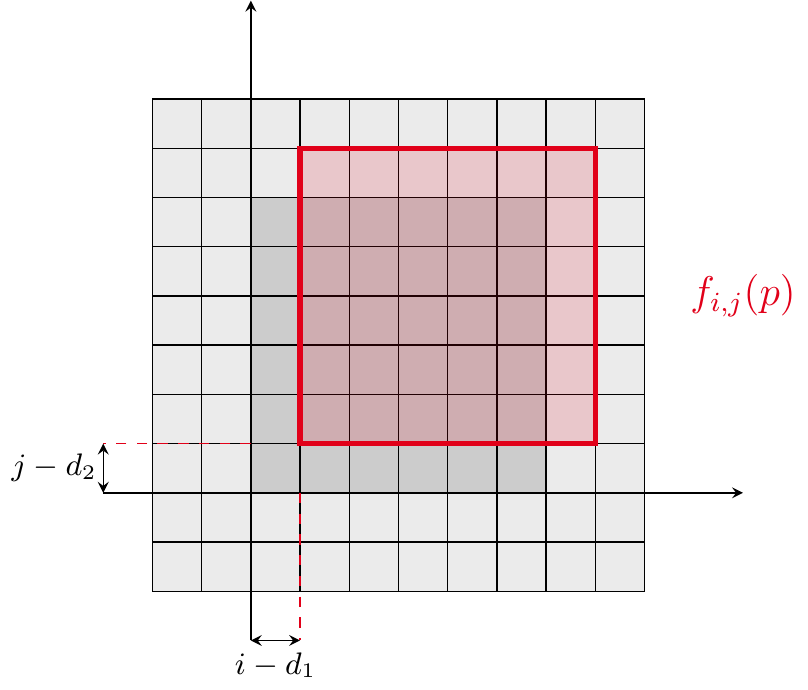}
         \caption{$\sigma^k(p^{ext})$ and $f_{i,j}(p) = \tau_{i-d_1, j-d_2}(\sigma^k(p^{ext}))|_{ \{0,\dots, n-1\}^2}$}
         \label{fig:spext}
     \end{subfigure}
        \caption{Illustration of the definition of $f_{i,j}(p)$ with $n=6, i=j=1, M=2, k=1$ and $(d_1, d_2) = (0,0)$. Figure (a) shows ${p \in \mathcal{P}_c(\lceil \frac{n-i}{M^k} \rceil, \lceil \frac{n-j}{M^k} \rceil)}$, (b) shows the completion to $p^{ext}$ and (c) the image of $p^{ext}$ by $\sigma^k$ and the pattern $f_{i,j}(p)$ extracted from it.  }
        \label{fig:fij}
\end{figure}

\noindent Then, the following function:
\[\begin{array}{ccccc}
f & : & \bigsqcup_{0 \leq i,j \leq M^k-1} \mathcal{P}_c(\lceil \frac{n-i}{M^k} \rceil, \lceil \frac{n-j}{M^k} \rceil) & \to & \mathcal{P}_c(n,n) \\
 & & p \in \mathcal{P}_c(\lceil \frac{n-i}{M^k} \rceil, \lceil \frac{n-j}{M^k} \rceil) & \mapsto & f_{i,j}(p) \\
\end{array}\]
 is well defined. Now we show that it is injective.

Let $p \in  \mathcal{P}_c(\lceil \frac{n-i}{M^k} \rceil, \lceil \frac{n-j}{M^k} \rceil)$, $p' \in  \mathcal{P}_c(\lceil \frac{n-i'}{M^k} \rceil, \lceil \frac{n-j'}{M^k} \rceil)$ and suppose that $f(p) = f(p')$. 
By definition of $f$, there are some patterns $p^{ext}, p'^{ext} \sqsubset c$, such that \[ p^{ext}|_{ \{0, \dots, \lceil \frac{n-i}{M^k} \rceil - 1\} \times \{0, \dots, \lceil \frac{n-j}{M^k} \rceil - 1\}} = p 
\text{~ and ~} 
p'^{ext}|_{ \{0, \dots, \lceil \frac{n-i'}{M^k} \rceil - 1\} \times \{0, \dots, \lceil \frac{n-j'}{M^k} \rceil - 1\}} = p' \] 
and that:
\begin{equation}
    \label{eq:fp}
    f(p) = \tau_{i-d_1,j-d_2}(\sigma^k(p^{ext}))|_{\{0, \dots, n-1\}^2} = \tau_{i' - d_1,j'-d_2}(\sigma^k(p'^{ext}))|_{\{0, \dots, n-1\}^2} = f(p')
\end{equation}

Moreover as $f(p) = f(p')$ is a pattern of size $(n,n)$, it has a unique desubstitution by $\sigma^k$ by assumption. Therefore, $(i-d_1,j-d_2) \equiv (i'-d_1, j'-d_2) \mod (M^k,M^k)$. As $i,j,i',j' \in \{0,\dots, M^k -1\}$, it implies that $(i,j) = (i',j')$. 
\noindent Consequently, \cref{eq:fp} can be rewritten as
\[ \tau_{i-d_1,j-d_2}(\sigma^k(p^{ext}))|_{\{0, \dots, n-1\}^2} = \tau_{i - d_1,j-d_2}(\sigma^k(p'^{ext}))|_{\{0, \dots, n-1\}^2} .\]
That is,
\[ \sigma^k(p^{ext})|_{\{0, \dots, n-1\}^2 - (i-d_1,j-d_2)} = \sigma^k(p'^{ext})|_{\{0, \dots, n-1\}^2 - (i-d_1,j-d_2)} . \]
And as $p^{ext} \sqsubseteq p$,
\begin{equation}
    \label{eq:sk}
    \sigma^k(p)|_{\{0, \dots, n-1\}^2 - (i-d_1,j-d_2)} = \sigma^k(p')|_{\{0, \dots, n-1\}^2 - (i-d_1,j-d_2)} ,
\end{equation}
on the domain where $\sigma^k(p)$ is defined.

\begin{sloppypar}
Now we show that $p$ and $p'$ are the same on all their domain. 
Let~${\vec u \in \{0, \dots, \lceil \frac{n-i}{M^k} \rceil - 1\} \times \{0, \dots, \lceil \frac{n-j}{M^k} \rceil - 1\}}$. First, we have:
\end{sloppypar}
\[ M^k \vec u + (d_1, d_2) \in \{0, \dots, n-1\}^2 - (i-d_1,j-d_2) .\]
Therefore by using the fact that $(d_1,d_2)$ is a determining position, \cref{eq:sk} gives:
\begin{align*}
 & \sigma^k(p)_{ M^k \vec u+ (d_1, d_2)} = \sigma^k(p')_{ M^k \vec u+ (d_1, d_2)}, \\
\Rightarrow~ & \sigma^k(p_{\vec u})_{d_1,d_2} = \sigma^k(p'_{\vec u})_{d_1,d_2}, \\
\Rightarrow~ & p_{\vec u} = p'_{\vec u} .
\end{align*}

\noindent This being true for all $\vec u \in \{0, \dots, \lceil \frac{n-i}{M^k} \rceil - 1\} \times \{0, \dots, \lceil \frac{n-j}{M^k} \rceil - 1\}$, it eventually holds that $p = p'$. This concludes the proof of the injectivity of $f$.

The injectivity of $f$ yields the following inequality:

\[
\left| \mathcal{L}_{n,n}(c) \right| \geq  \left| \bigsqcup_{0 \leq i,j \leq M^k-1} \mathcal{P}_c(\lceil \frac{n-i}{M^k} \rceil, \lceil \frac{n-j}{M^k} \rceil) \right|  =  \sum_{0\leq i,j \leq M^k-1} \left| \mathcal{P}_c(\lceil \frac{n-i}{M^k} \rceil, \lceil \frac{n-j}{M^k} \rceil) \right|.
\]
Since $c$ is aperiodic and by \cref{coro:karimoutot}, for all $0 \leq i,j \leq M^k - 1$, it holds that:
 \begin{align*}
 \left| \mathcal{P}_c(\lceil \frac{n-i}{M^k} \rceil, \lceil \frac{n-j}{M^k} \rceil) \right| & \geq  \lceil \frac{n-i}{M^k} \rceil \times \lceil \frac{n-j}{M^k} \rceil + 1 \\
  & \geq \lceil \frac{n-i}{M^k} \rceil \times \lceil \frac{n-j}{M^k} \rceil \left( 1 + \frac{1}{\lceil \frac{n-i}{M^k} \rceil \times \lceil \frac{n-j}{M^k} \rceil}\right)\\
  & \geq \lceil \frac{n-i}{M^k} \rceil \times \lceil \frac{n-j}{M^k} \rceil \left( 1 + \frac{1}{\lceil \frac{n}{M^k} \rceil ^2}\right)\\
 \end{align*}
Eventually, taking the sum and using a classical property of the ceiling function \cite[p85]{concretemath} gives:
\begin{align*}
 \left| \mathcal{L}_{n,n}(c) \right| & \geq \sum_{0\leq i,j \leq M^k-1} \left| \mathcal{P}_c(\lceil \frac{n-i}{M^k} \rceil, \lceil \frac{n-j}{M^k} \rceil) \right|\\
 & \geq \left( 1 + \frac{1}{\lceil \frac{n}{M^k} \rceil ^2}\right) \sum_{0\leq i,j \leq M^k-1} \lceil \frac{n-i}{M^k} \rceil \times \lceil \frac{n-j}{M^k} \rceil \\
 & =  \left( 1 + \frac{1}{\lceil \frac{n}{M^k} \rceil ^2}\right) n^2 .
\end{align*}
\end{proof}

Our last lemma gives us a bound on how many times patterns from a substitutive subshift can be desubstitutued thanks to Solomyak result's (\cref{theo:solomyak}).

\begin{lemma}
 \label{lemma:unsubstitution}
 Let $\sigma$ be an invertible substitution.
 If all patterns $p\in \mathcal{L}(X^\sigma)$ whose support contains $\{0, \dots, \rho - 1\}^2$ can be uniquely desubstitued by $\sigma$;
 then for all $k$, all patterns in $\mathcal{L}(X^\sigma)$ whose support contains $\{0, \dots, (\rho + 1)M^{k-1} -2\}^2$ can be uniquely desubstited by $\sigma^k$.
\end{lemma}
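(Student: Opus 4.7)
The plan is to proceed by induction on $k$, where $\sigma$ has size $(M,M)$ (implicit from the use of $M^{k-1}$ in the bound). The base case $k=1$ is immediate since $(\rho+1)M^{0}-2 = \rho-1$, which matches the hypothesis of the lemma verbatim.

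For the inductive step I would decompose $\sigma^{k+1} = \sigma^k \circ \sigma$ and chain two levels of desubstitution. Take $p \in \mathcal{L}(X^\sigma)$ with support $\mathcal{D} \supseteq \{0,\dots,(\rho+1)M^{k}-2\}^2$. Since $(\rho+1)M^{k}-2 \geq (\rho+1)M^{k-1}-2$, the induction hypothesis provides a unique-mod-$M^k$ translation $\vec{t}_1$ and a configuration $c_1 \in X^\sigma$ with $\tau_{\vec{t}_1}(\sigma^k(c_1))|_{\mathcal{D}} = p$. Fixing the representative $\vec{t}_1 \in \{0,\dots,M^k-1\}^2$, let $U$ be the set of cells of $c_1$ whose $\sigma^k$-image meets $\mathcal{D}-\vec{t}_1$, and $U_0 \subseteq U$ the cells whose entire $M^k \times M^k$ image block lies inside $\mathcal{D}-\vec{t}_1$.

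The heart of the argument is an elementary arithmetic check that $U_0 \supseteq \{0,\dots,\rho-1\}^2$: unwinding the inclusion $M^k\vec{u}+\{0,\dots,M^k-1\}^2 \subseteq \mathcal{D}-\vec{t}_1$ with $\vec{t}_1 \in \{0,\dots,M^k-1\}^2$ yields $0 \leq u_i \leq \rho-1$ in each coordinate, which is exactly why the bound $(\rho+1)M^{k-1}-2$ is tuned this way. Consequently $c_1|_U$ is a pattern in $\mathcal{L}(X^\sigma)$ whose support contains $\{0,\dots,\rho-1\}^2$, so the base hypothesis of the lemma supplies a unique-mod-$M$ translation $\vec{t}_2$ and $c_2 \in X^\sigma$ with $\tau_{\vec{t}_2}(\sigma(c_2))|_U = c_1|_U$. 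Setting $\vec{T} := \vec{t}_1 + M^k \vec{t}_2$ and using the identity $\tau_{M^k \vec{t}_2} \circ \sigma^k = \sigma^k \circ \tau_{\vec{t}_2}$, one verifies $\tau_{\vec{T}}(\sigma^{k+1}(c_2))|_{\mathcal{D}} = p$, giving existence of a $\sigma^{k+1}$-desubstitution.

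For uniqueness, any other desubstitution $(c',\vec{T}')$ of $p$ yields a $\sigma^k$-desubstitution $(\sigma(c'),\vec{T}')$, so the induction hypothesis forces $\vec{T}' \equiv \vec{t}_1 \pmod{M^k}$; writing $\vec{T}' = \vec{t}_1 + M^k \vec{s}'$, the invertibility of $\sigma^k$ (injectivity on letters, from $\sigma$ being invertible) determines the values of $\tau_{\vec{s}'}(\sigma(c'))$ on $U_0$ from $p$ alone, so they coincide with $c_1|_{U_0}$, and the base hypothesis applied once more forces $\vec{s}' \equiv \vec{t}_2 \pmod M$, whence $\vec{T}' \equiv \vec{T} \pmod{M^{k+1}}$. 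I expect the only genuinely non-routine step to be the geometric bookkeeping showing $U_0 \supseteq \{0,\dots,\rho-1\}^2$; everything else reduces to routine manipulation of translations and unwinding the definition of desubstitution.
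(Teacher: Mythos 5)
Your proof is correct, and it rests on the same skeleton as the paper's (induction on $k$, invertibility of $\sigma$ to peel off a layer of substitution, and the arithmetic fact that a square of side $(\rho+1)M^{k}-1$ minus an offset of at most $M^{k}-1$ still contains $\rho$ full $M^{k}\times M^{k}$ blocks), but you run the decomposition in the opposite order. The paper writes $\sigma^{k+1}=\sigma\circ\sigma^{k}$ and peels the \emph{outer} $\sigma$ first: the base hypothesis applies directly to $p$ (its support certainly contains a $\rho$-square), and the induction hypothesis is then applied to the once-desubstituted pattern $\sigma^{k}(c)|_{\mathcal{R}+\vec v}$, whose support must be checked to contain a $\bigl((\rho+1)M^{k-1}-1\bigr)$-square. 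You instead write $\sigma^{k+1}=\sigma^{k}\circ\sigma$, apply the induction hypothesis directly to $p$, and push all the geometric bookkeeping down to the $k$-times-desubstituted pattern $c_1|_{U_0}$, which must be checked to contain a $\rho$-square; your inequality $0\le u_i\le\rho-1$ is exactly the paper's inclusion \eqref{eq:inclusion_lemma11} in disguise. Your version has the mild advantage of making the existence half of ``unique desubstitution'' explicit (the paper only argues uniqueness of the offset), at the cost of needing $c_1\in X^{\sigma}$ so that $c_1|_{U_0}\in\mathcal{L}(X^{\sigma})$ before the base hypothesis can be invoked -- which does hold, since the definition of unique desubstitution supplies $c_1\in X^{\sigma}$. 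Both routes are sound and of essentially equal length.
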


\begin{proof}
We show the lemma by induction on $k$. 
The result is true for $k = 1$. 
Suppose that the result is true for some $k \geq 1$ fixed. 
Let $p$ be a pattern in $\mathcal{L}(X^\sigma)$ with support $\mathcal{D}$ such that $\{0,\dots, (\rho + 1)M^k - 2 \}^2 \subset \mathcal{D}$. 
Let $\vec t, \vec t'\in\Z^2$ be such that $p  = \tau_{\vec t}(\sigma^{k+1}(c))|_{ \mathcal{D}} = \tau_{\vec t'}(\sigma^{k+1}(c'))|_{ \mathcal{D}}$ for some $c,c' \in X^\sigma$. 
By supposition, and because $(\rho + 1)M^k \geq \rho$, the pattern $p$ can be uniquely desubstitued by $\sigma$. This, and the fact that $p  = \tau_{\vec t}(\sigma(\sigma^{k}(c)))|_{ \mathcal{D}} = \tau_{\vec t'}(\sigma(\sigma^{k}(c')))|_{ \mathcal{D}}$ implies that $\vec t \equiv \vec t' \mod (M,M)$. 
Therefore $\vec t' = \vec t + M\vec u$ for some $\vec u \in \Z^2$ and then:
\[ \tau_{\vec t}(\sigma(\sigma^{k}(c)))|_{ \mathcal{D}} = \tau_{\vec t + M\vec u}(\sigma(\sigma^{k}(c')))|_{ \mathcal{D}} ,\]
\[ \Rightarrow \sigma(\sigma^{k}(c))|_{ \mathcal{D} + \vec t} = \tau_{au}(\sigma(\sigma^{k}(c')))|_{ \mathcal{D} + \vec t} \]
\[ \Rightarrow \sigma(\sigma^{k}(c))|_{ \mathcal{D} + \vec t} = \sigma(\tau_u(\sigma^{k}(c')))|_{ \mathcal{D} + \vec t} .\]
Let $\mathcal{R} = \{0, \dots, (\rho + 1)M^{k-1} -2\}^2$.
For some $\vec v \in \Z^2$, we have the following inclusion (see \cref{fig:subsets_lemma11}):
\begin{equation}
  \label{eq:inclusion_lemma11}
  M (\mathcal{R} + \vec v) + \{0, \dots, M-1\}^2\subset \{0, \dots, (\rho + 1)M^{k} -2\}^2 + \vec t \subset \mathcal{D} + \vec t .
\end{equation}

\begin{figure}[ht]
    \centering
    \includegraphics[scale=0.9]{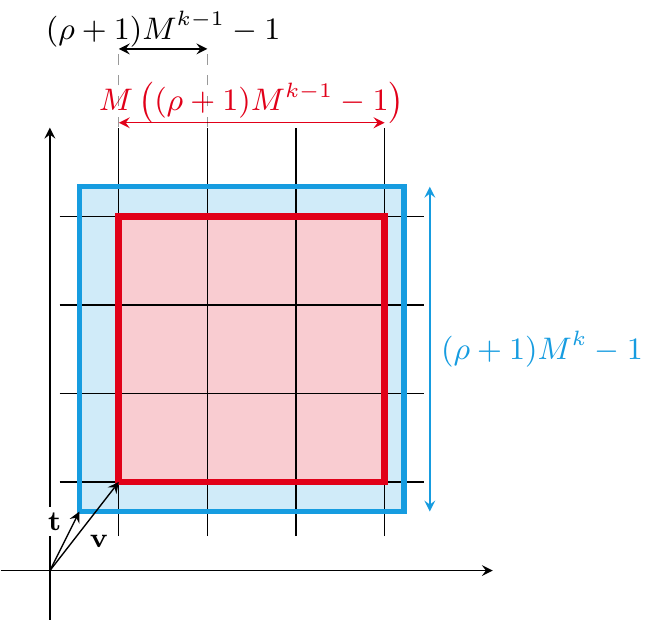}
    \caption{Inclusion of the sets of \cref{eq:inclusion_lemma11}.}
    \label{fig:subsets_lemma11}
\end{figure}

\noindent
Then, because $\sigma$ is invertible, we have:
\begin{align*}
  & \sigma(\sigma^{k}(c))|_{ \mathcal{D} + \vec t} = \sigma(\tau_u(\sigma^{k}(c')))|_{ \mathcal{D} + \vec t} \\
 \Rightarrow~ & \sigma(\sigma^{k}(c))|_{ M (\mathcal{R} + \vec v)+ \{0, \dots, M-1\}^2} = \sigma(\tau_u(\sigma^{k}(c')))|_{M (\mathcal{R} + \vec v)+ \{0, \dots, M-1\}^2} \\
 \Rightarrow~ & \sigma^{k}(c)|_{\mathcal{R} + \vec v} = \tau_u(\sigma^{k}(c'))|_{\mathcal{R} + \vec v} \\
\end{align*}
By induction hypothesis, this pattern can be uniquely desubstitued by $\sigma^k$, therefore it holds that $\vec u  \equiv 0 \mod M^k$, which implies that $\vec t' = \vec t + M\vec u \equiv \vec t \mod M^{k+1}$ and that $p$ can be uniquely desubstitued by $\sigma^{k+1}$. 
\end{proof}

One can then desubstitute patterns from a substitutive subshifts as many times as possible, and bringing the two previous lemmas together gives the lower bound:

\begin{theorem}
    \label{th:main}
	Let $\sigma$ be a primitive square aperiodic substitution having a determining position. 
	Then there exists $K > 1$ such that
	\[ \forall c \in X^\sigma, \forall n \in \N, \left| \mathcal{L}_{n,n}(c) \right| \geq K n^2 \]
\end{theorem}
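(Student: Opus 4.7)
The plan is to combine Solomyak's characterization of aperiodicity (Theorem~6) with Lemma~10 (which lifts unique desubstitutability from $\sigma$ to $\sigma^k$) and Lemma~9 (the turbolemme), choosing the iteration depth $k=k(n)$ as a function of the pattern size $n$. The key observation is that $k$ should be picked large enough for Lemma~10 to guarantee unique $\sigma^k$-desubstitutability of every $n\times n$ pattern of $X^\sigma$, while keeping $M^k$ comparable to $n$ so that the multiplicative factor $1+\frac{1}{\lceil n/M^k\rceil^{2}}$ furnished by Lemma~9 stays bounded below by a constant strictly greater than $1$.

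Concretely, since $\sigma$ is primitive and aperiodic, Solomyak's theorem produces $\rho>0$ such that every pattern with support inside $\{1,\ldots,\rho\}^2$ is uniquely $\sigma$-desubstitutable. A determining position forces $\sigma$ to be invertible, so Lemma~10 applies: for every $k\geq 1$, every pattern in $\mathcal{L}(X^\sigma)$ whose support contains $\{0,\ldots,(\rho+1)M^{k-1}-2\}^2$ is uniquely $\sigma^k$-desubstitutable. For $n\geq \rho$, I define $k(n)$ as the largest integer satisfying $(\rho+1)M^{k-1}\leq n+1$, so that every $n\times n$ pattern of $X^\sigma$ is uniquely $\sigma^{k(n)}$-desubstitutable. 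Maximality of $k(n)$ yields $(\rho+1)M^{k(n)}>n+1$, hence
\[ \frac{n}{M^{k(n)}} \;<\; \frac{n(\rho+1)}{n+1} \;<\; \rho+1, \qquad \text{so} \qquad \left\lceil \frac{n}{M^{k(n)}} \right\rceil \leq \rho+1. \]

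Applying Lemma~9 with this $k(n)$ then gives $|\mathcal{L}_{n,n}(c)|\geq n^{2}\bigl(1+\tfrac{1}{(\rho+1)^{2}}\bigr)$ for every $c\in X^\sigma$ and every $n\geq \rho$. Setting $K:=1+\frac{1}{(\rho+1)^{2}}>1$, it remains to treat the finitely many values $n<\rho$: Corollary~5 provides $|\mathcal{L}_{n,n}(c)|\geq n^{2}+1$, and this exceeds $Kn^{2}$ because $n<\rho<\rho+1$ forces $1>n^{2}/(\rho+1)^{2}$. The same constant $K$ therefore works uniformly for all $n\in\N$.

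The main obstacle is precisely this balancing act in the definition of $k(n)$: $k$ has to be large enough that Lemma~10 actually certifies unique $\sigma^{k}$-desubstitutability of $n\times n$ patterns, yet the lower bound from Lemma~9 degenerates to $n^{2}+1$ whenever $n/M^{k}$ grows without bound, so $M^{k}$ must track $n$ up to a multiplicative constant. Choosing $k(n)$ as the maximal index still admissible in Lemma~10 sits exactly at this threshold and forces $n/M^{k(n)}$ to be controlled by the $\sigma$-intrinsic constant $\rho+1$, which in turn is what promotes the bound from $mn+1$ to $\Theta(n^{2})$ with a strictly super-linear leading constant.
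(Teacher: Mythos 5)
Your proposal is correct and follows essentially the same route as the paper: Solomyak's theorem supplies $\rho$, Lemma~\ref{lemma:unsubstitution} lifts unique desubstitutability to $\sigma^{k}$, a maximal choice of $k(n)$ forces $\lceil n/M^{k(n)}\rceil\leq\rho+1$, and \cref{turbolemme} yields $K=1+\frac{1}{(\rho+1)^2}$. Your explicit definition of $k(n)$ via the inequality $(\rho+1)M^{k-1}\leq n+1$, the observation that a determining position implies invertibility, and the separate treatment of $n<\rho$ via \cref{coro:karimoutot} are minor tightenings of details the paper leaves implicit.
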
	

\begin{proof}
 By \cref{theo:solomyak}, there exists $\rho$ such that all patterns $p$ appearing in the configurations of $X^\sigma$ in which a square of size $\rho$ fits can be uniquely desubstitued by $\sigma$. 
 Let $c \in X^\sigma$ and $n \in \N$. Let $k(n)  = \max \{ k  \in \N \mid \forall p \in \mathcal{L}_{n,n}(c), p \text{ can be uniquely desubstitued by }\sigma^k \}$. 
 By \cref{turbolemme} it holds that:
 \[ |\mathcal{L}_{n,n}(c)| \geq n^2 \left( 1 + \frac{1}{\lceil \frac{n}{M^{k(n)}} \rceil^2} \right) \]

 Additionnaly, $(\rho + 1)M^{k(n)} - 1 > n$ otherwise \cref{lemma:unsubstitution} would contradict the maximality of $k(n)$.
 Therefore, $\rho + 1 > \frac{n}{M^{k(n)}} $ and $\rho + 1 > \lceil \frac{n}{M^{k(n)}} \rceil$. This turns our bound into:
 
  \[ |\mathcal{L}_{n,n}(c)| \geq n^2 \left( 1 + \frac{1}{(\rho + 1)^2} \right) \]
  
  \noindent which gives a constant $K := 1 + \frac{1}{(\rho + 1)^2} > 1$ which does depend neither on $c$ nor on $n$.
\end{proof}

\begin{remark}
    A similar bound can be found for non-square uniform substitutions.
    However, if one wants to know the pattern complexity $|\mathcal{L}_{n,n}(c)|$ for some $m,n$, the bound depends on how close the shape of the substitution is with respect to the rectangle $(m,n)$.
    More precisely, let $\sigma$ be a substitution of size $(M,N)$. Then for any $R>0$ such that
    \[ \left| \log_M(m) - \log_N(n) \right| \leq R ,\]
    there exists $K_R>1$ such that for all $c\in X^\sigma$,
    \[ |\mathcal{L}_{n,n}(c)| \geq K_R mn  . \]
\end{remark}

To conclude this section, let us apply our result to Robinson's tileset.
Let $X^R\subseteq A^{\Z^2}$ be the SFT constituted of all valid tiling by Robinson's tileset $A$ \cite{Robinson}. 
%We provide a visualisation of the whole set of tiles in Appendix A. 
%This tileset was constructed to force the apparition of squares bigger and bigger in the valid tilings. \cref{fig:squares} shows portions of valid tilings containing such growing squares.

A natural way of constructing a substitution on $A$ is to ``grow'' the pattern of each tile to a $2\times 2$ pattern. 
Each tile is composed of a back ground marking (in light-blue and gray on our drawings) and of a main marking (in black). We generate the rules of our $2\times 2$ substitution $\sigma$ in the following way:
\begin{enumerate}
\item The image of each tile has the same background marking,
\item the upper right tile of the image of a tile $t$ has the same main marking than $t$,
\item the lower left tile of the image of $t$ has a main marking with four out going arrows having its blue parts oriented as the background marking of $t$,
\item the two other tiles are chosen to have one main arrows and to incoming arrows on each side so that the main arrow extends the upper right tile.
\end{enumerate}
\cref{fig:robinson_subs} illustrates the construction of $\sigma(t)$ for one tile $t\in A$ and \cref{fig:squares} its iteration.
For space considerations, we do not write the full substitution obtained by this process.
\begin{figure}[ht]
    \centering
    \includegraphics[width = 14cm]{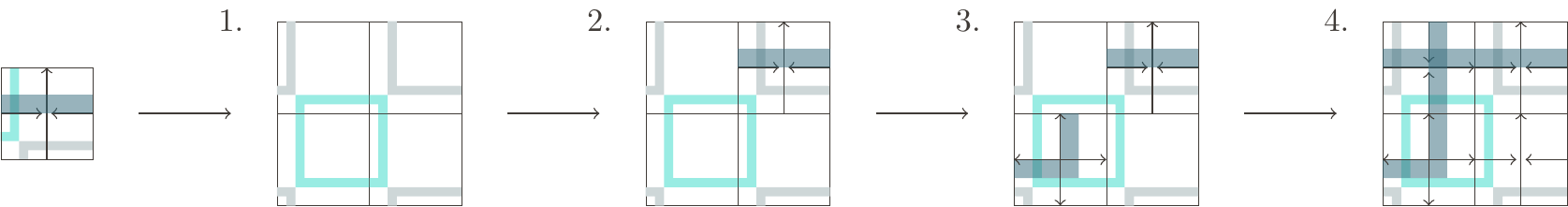}
    \caption{The four steps of construction of the substitution $\sigma$.} 
    \label{fig:robinson_subs}
\end{figure}

\begin{figure}[ht]
    \centering
    \includegraphics[width = 13cm]{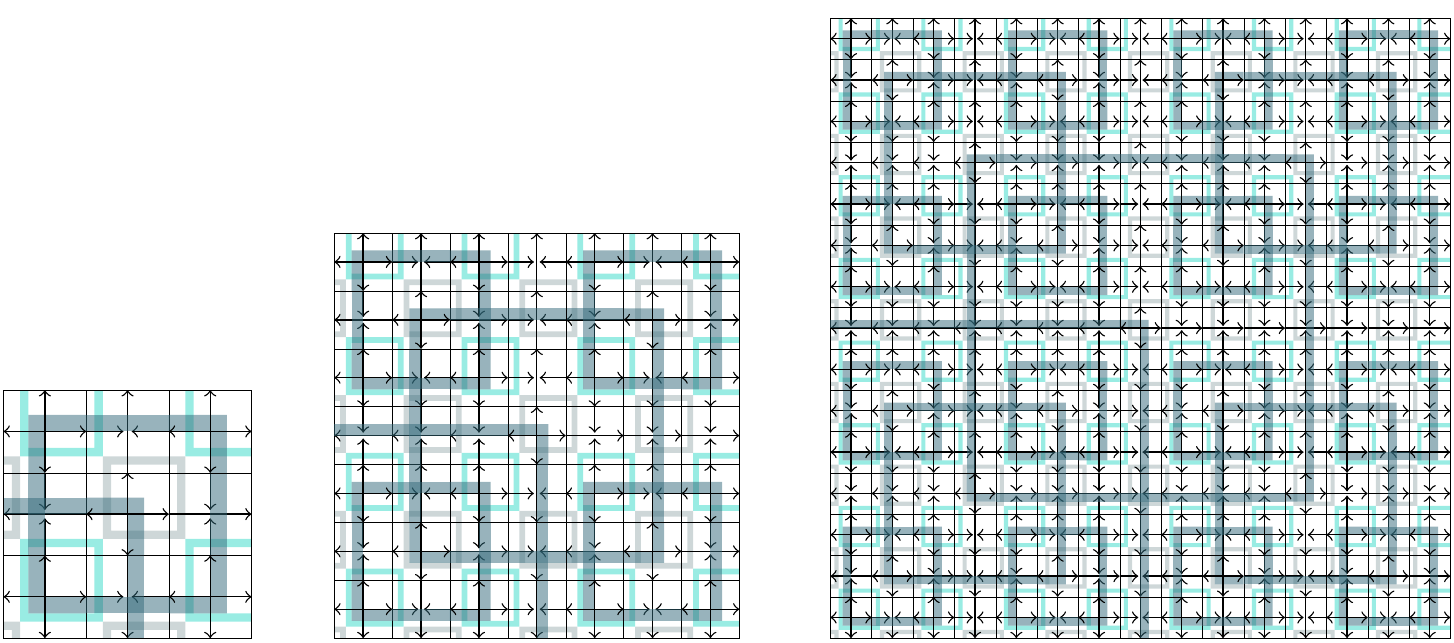}
    \caption{Portions of patterns generated by successive applications of $\sigma$.}
    \label{fig:squares}
\end{figure}

Even though $X^R$ is not a substitutive subshift by itself, as substitutive patterns of its configurations can be separated by “fracture lines”, one can show that $X^\sigma \subset X^R$, and therefore for all $n$, $|\mathcal{L}_{n,n}(X^\sigma)| \leq |\mathcal{L}_{n,n}(X^R)|$.

Now, we partition $A$ into two sets $A_1$ and $A_2$. $A_1$ contains all the tiles with four outgoing arrows, of the form \includegraphics[]{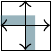}, and $A_2$ all the other tiles.
Then, we define 
\[
f: \begin{cases}
A &\rightarrow \{ 0,1 \} \\
t &\mapsto
\begin{cases}
 1 & \text{ if } t\in A_1\\
 0 & \text{ if } t\in A_2
\end{cases}
\end{cases} .
\]

$f$ is compatible with $\sigma$ in the sense that if $f(t_1) = f(t_2)$ for two tiles in $A$, then $f(\sigma(t_1)) = f(\sigma(t_2))$. 
Therefore the following substitution is well defined:
\[ \sigma':
\begin{cases}
\{0,1\} & \rightarrow \{0,1\}^{\{0,1\}^2} \\
a & \mapsto f((\sigma(t)) \text{ with } \sigma(t) = a
\end{cases}
\]
\cref{fig:robinson_proj} illustrate $\sigma'$ the kind of patters it generates.

\begin{figure}[ht]
    \centering
    \includegraphics[width = 9cm]{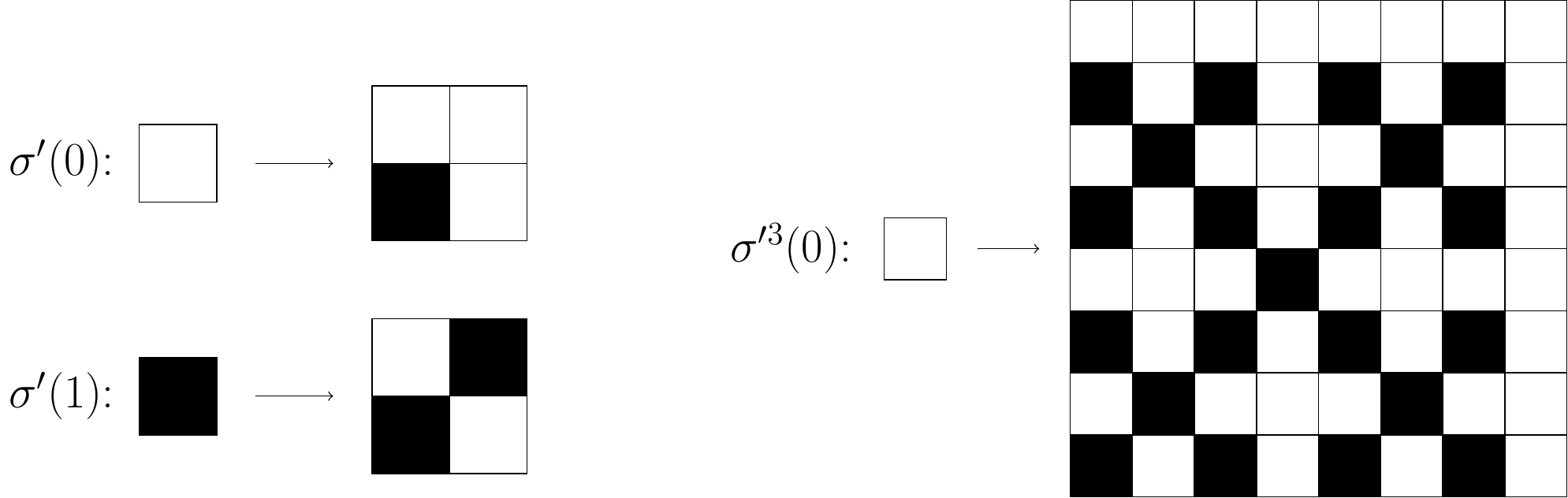}
    \caption{The substitution $\sigma'$ and an example of pattern generated by it.}
    \label{fig:robinson_proj}
\end{figure}

Let $X^{\sigma'}$ be the subshift of $\sigma'$. We have that $\mathcal{L}(X^{\sigma'}) = \{\sigma'^k(0), \sigma'^k(1) \mid k \in \N \} = \{ f(\sigma^k(t)) \mid t\in A, k \in \N \}$. Therefore for all $n$, it holds that $\mathcal{L}_{n,n}(X^{\sigma'}) = f(\mathcal{L}_{n,n}(X^{\sigma}))$. 
This implies that  $|\mathcal{L}_{n,n}(X^{\sigma'})| \leq |\mathcal{L}_{n,n}(X^{\sigma})|$. 
\noindent
Then, $\sigma'$ is primitive with a determining position and $X^{\sigma'}$ is aperiodic since every $1$ corresponds to one corner of a square in Robinson's tiling, therefore \cref{th:main} gives that there exists $K>1$ such that $|\mathcal{L}_{n,n}(X^{\sigma'})| \geq Kn^2 $, and then:
\[ |\mathcal{L}_{n,n}(X^R)| \geq K n^2 . \]

\section{Perspectives}
\label{sec:future}
We believe that \cref{th:main} can be extended to any square uniform primitive substitution, without the assumption of determining position. 
In the current proof, this condition allows us to ``recover'' information on patterns when de-substituting in \cref{turbolemme}, even when the pattern does not contain a full $\sigma^k(a)$ for some $a\in A$.
For now, we do not see how to obtain a quadratic lower bound without this information, but we think that it is possible to do.
Such a generalization would show that all substitutive subshifts of square primitive substitutions have complexity $|\mathcal{L}_{n,n}(X^\sigma)| \in \Theta(n^2)$.
In addition to a better understanding of aperiodic subshifts, we hope that this is the first step towards a classification of two-dimensional substitutions in terms of pattern complexity, similar to Pansiot's \cite{Pansiot} in the one dimensional case.

It is also unknown if the $mn+1$ complexity bound of \cref{coro:karimoutot} can be improved for non-substititive aperiodic subhifts.

\section*{Acknowledgements}
The authors want to thank Nathalie Aubrun for her guidance and lots of helpful advice, as well as Guilhem Gamard for his careful proofreading.

\newpage
\bibliography{biblio}

\newpage
\appendix

\section{The Robinson tileset}
\begin{figure}[ht]
\centering
\includegraphics{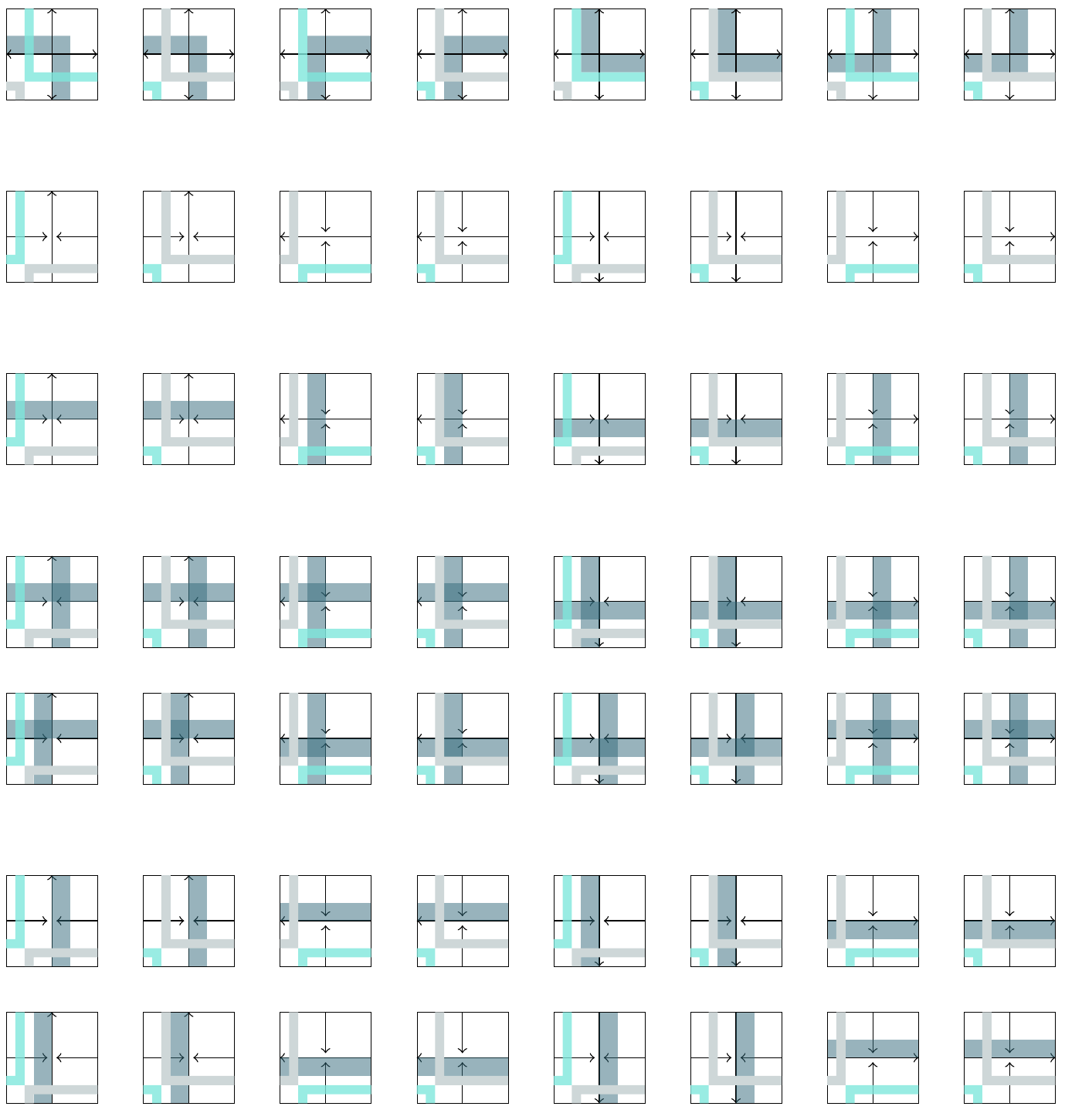}
\caption{Robinson's tileset.}
\label{fig:robinson_full}
\end{figure}

\end{document}